\newtheorem{theorem}{Theorem}[section]
\newtheorem{lemma}[theorem]{Lemma}
\newtheorem{proposition}[theorem]{Proposition}
\newtheorem{corollary}[theorem]{Corollary}
\newtheorem{remark}[theorem]{Remark}
\newtheorem{definition}{Definition}
\newtheorem{overview}{Overview}
\newenvironment{proof}[1][Proof]{\begin{trivlist}
\item[\hskip \labelsep {\bfseries #1}]}{\end{trivlist}}
\newcommand{\BigO}[1]{\ensuremath{\operatorname{O}\left(#1\right)}}
\begin{document}
\title{The Online Disjoint Set Cover Problem and its Applications}
\author{
\IEEEauthorblockN{Ashwin Pananjady, Vivek Kumar Bagaria} 
\IEEEauthorblockA{Department of Electrical Engineering\\
Indian Institute of Technology Madras\\
Email: \{ee10b047, ee10b025\}@ee.iitm.ac.in}
\and
\IEEEauthorblockN{Rahul Vaze}
\IEEEauthorblockA{School of Technology and Computer Science\\
Tata Institute of Fundamental Research\\
Email: vaze@tcs.tifr.res.in}
}

\maketitle

\begin{abstract}
Given a universe $U$ of $n$ elements and a collection of subsets $\mathcal{S}$ of $U$, the maximum disjoint set cover problem (DSCP) is to partition $\mathcal{S}$ into as many set covers as possible, where a set cover is defined as a collection of subsets whose union is $U$.
We consider the online DSCP, in which the subsets arrive one by one (possibly in an order chosen by an adversary), and must be irrevocably assigned to some partition on arrival with the objective of minimizing the competitive ratio. The competitive ratio of an online DSCP algorithm $A$ is defined as the maximum ratio of the number of disjoint set covers obtained by the optimal offline algorithm to the number of disjoint set covers obtained by $A$ across all inputs.
We propose an online algorithm for solving the DSCP with competitive ratio $\ln n$. We then show a lower bound of $\Omega(\sqrt{\ln n})$ on the competitive ratio for any online DSCP algorithm.
The online disjoint set cover problem has wide ranging applications in practice, including the online crowd-sourcing problem, the online coverage lifetime maximization problem in wireless sensor networks, and in online resource allocation problems.
\end{abstract}
\section{Introduction}
Consider a universe $U$ consisting of $n$ elements, i.e., $U = \{1,2,\dots, n\}$. Let $\mathcal{S}=\{S_1,S_2,\ldots\}$ be a collection of subsets of $U$, where $S_i \subseteq U$ $\forall$ $i$. Then the disjoint set cover problem (DSCP) is to find as many partitions of $\mathcal{S}$ as possible such that the union of the subsets in each partition is $U$. The DSCP is known to be NP-hard \cite{cardei2005improving}, and has an optimal approximation ratio of $\ln n$ with any polynomial time algorithm, by \cite{feige2002approximating, bagaria2013optimally}.

The DSCP is a fundamental combinatorial optimization problem that has widespread applications. Maximizing the coverage lifetime (MLCP) of a sensor network \cite{bagaria2013optimally, berman2004power} is one example. Here, $U$ is the set of targets, and each sensor $i$ can cover/track targets $S_i\subseteq U$. The objective is to find a sensor operation (on/off) schedule such that the total time for which all targets are covered is maximized. One common approach to solve the MLCP is to find the DSCP solution, and use each of the disjoint set covers in distinct time slots \cite{cardei2005improving, bagaria2013optimally, slijepcevic2001power}.

Another DSCP application of interest is in resource allocation and scheduling problems \cite{pananjady2014maximizing}. A canonical example of a resource allocation problem is where the universe $U$ represents the set of files or sub-files of a movie or large file, and each server $i$ contains a subset $S_i$ of those files. Each server has limited capability, and can serve at most one user at any given time. Thus, maximizing the number of users that can access all files of the movie (and consequently the revenue) is equivalent to solving the DSCP.

The new paradigm of crowd-sourcing also involves solving the DSCP. In crowd-sourcing \cite{vazeashwin, sheng2012energy}, a platform (public utility) advertises the set of tasks (the universe $U$) that it wants to be accomplished (e.g. pothole tracking, community service, etc.). Each user $i$ submits a list of tasks $S_i$ that it can perform, and the platform has to group/cluster subset of users in as many groups as possible such that each group ensures task coverage, i.e. all tasks should be performed by at least one of the users in the group. The crowd-sourcing problem with task coverage is equivalent to the DSCP. 

The DSCP is also relevant for finding efficient supply chain management solutions \cite{lu2011fundamentals}, where $n$ distinct raw materials/sub-tasks are required for producing a particular good by a machine, and each sub-contractor/auxiliary-machine $i$ can supply/accomplish only a subset of raw materials/sub-tasks $S_i$ \cite{tompkins2003optimization}. Finding the optimal allocation/routing of supplier/sub-tasks to distinct machines in order to maximize the number of simultaneously producible goods is equivalent to the DSCP. 

In this paper, we consider an online version of the DSCP, where subsets arrive sequentially in time, and each subset has to be assigned to a partition irrevocably without knowledge of future subset arrivals. The objective is similar to the earlier case: to maximize the number of partitions such that the union of subsets in each partition is equal to $U$ at the end of all subset arrivals, but now relative to the optimal offline algorithm. The offline algorithm refers to the case when the sequence of arrival of subsets is revealed to the algorithm in advance. To study the effectiveness of online algorithms, the competitive ratio, which measures the ratio of the profit of the optimal offline algorithm and a particular online algorithm, is the metric of choice \cite{borodin1998online}. The smaller the competitive ratio of an online algorithm, the better its performance. Note that here we make no assumption on the complexity of the offline/online algorithm, the optimal offline or any online algorithm is allowed to have exponential complexity. 

Studying the online version of DSCP is important, since finding an efficient online algorithm for the DSCP is equivalent to solving the online versions of the optimization problems defined above. For example, the online DSCP corresponds to the online MLCP in a natural way, where sensors arrive/wake-up sequentially in time, and each sensor's on-off schedule has to be decided in an online fashion so as to maximize the coverage lifetime. Similarly, the online DSCP corresponds to online crowd-sourcing, where users submit their requests sequentially in time, and must be grouped with other users irrevocably without knowledge of task sets of future users. 

There is also a natural correspondence between the online DSCP and the online resource allocation problem, where servers become active or acquire the subset of files at arbitrary times, and server-user association has to be done in an online fashion without knowledge of future server arrivals. The online version of supply chain management has a similar correspondence to the online DSCP. 

Let $F_{min}$ be the minimum number of times any universe element occurs across all the subsets. In this paper, we make the following contributions.
\begin{itemize}
\item We show that any online algorithm must be given $F_{min}$ in advance to perform with a non-trivial competitive ratio.
\item Through the online polychromatic coloring of a specially defined hypergraph, we propose an online algorithm for the online DSCP that has a competitive ratio of $\BigO{\ln n}$.
\item We also show a lower bound on the competitive ratio, of $\Omega(\sqrt{\ln n})$, i.e. no online algorithm can have a better competitive ratio than $\Omega(\sqrt{\ln n})$. Note that no complexity assumptions are made on the online algorithm, and the lower bound holds for online algorithms even with exponential complexity. 
\end{itemize}

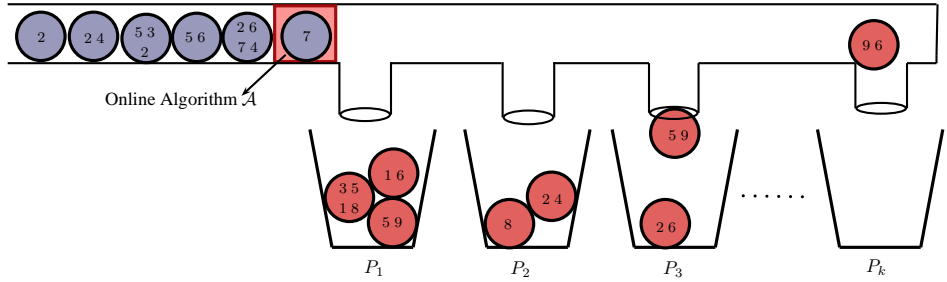
\begin{figure*}[ht]
\centering

\scalebox{0.7} 
{
\begin{pspicture}(0,-2.6489062)(17.79,2.6289062)

\definecolor{color4062b}{rgb}{0.8823529411764706,0.3803921568627451,0.3686274509803922}
\definecolor{color4108b}{rgb}{0.9921568627450981,0.5843137254901961,0.5843137254901961}
\definecolor{color4108}{rgb}{0.6431372549019608,0.050980392156862744,0.050980392156862744}
\definecolor{color53b}{rgb}{0.6,0.6,0.7333333333333333}
\rput{-90.0}(14.6110935,18.288906){\pscircle[linewidth=0.06,dimen=outer,fillstyle=solid,fillcolor=color4062b](16.45,1.8389063){0.49}}
\psline[linewidth=0.06cm](5.74,0.22890624)(6.16,-2.0110939)
\psline[linewidth=0.06cm](6.16,-2.0110939)(7.7,-2.0110939)
\psline[linewidth=0.06cm](7.7,-2.0110939)(8.12,0.22890624)
\psline[linewidth=0.06cm](8.68,0.22890624)(9.1,-2.0110939)
\psline[linewidth=0.06cm](9.1,-2.0110939)(10.64,-2.0110939)
\psline[linewidth=0.06cm](10.64,-2.0110939)(11.06,0.22890624)
\psline[linewidth=0.06cm](11.48,0.22890624)(11.9,-2.0110939)
\psline[linewidth=0.06cm](11.9,-2.0110939)(13.44,-2.0110939)
\psline[linewidth=0.06cm](13.44,-2.0110939)(13.86,0.22890624)
\psline[linewidth=0.06cm](15.38,0.22890624)(15.8,-2.0110939)
\psline[linewidth=0.06cm](15.8,-2.0110939)(17.34,-2.0110939)
\psline[linewidth=0.06cm](17.34,-2.0110939)(17.76,0.22890624)
\psline[linewidth=0.04cm](0.0,2.6089063)(17.64,2.6089063)
\psline[linewidth=0.04cm](17.66,2.6089063)(17.64,1.6289062)
\psline[linewidth=0.04cm](6.3,1.5085616)(6.3,0.51136315)
\psline[linewidth=0.04cm](7.28,1.5085616)(7.28,0.51136315)
\psellipse[linewidth=0.04,dimen=outer](6.79,0.51136315)(0.49,0.14245692)
\psline[linewidth=0.04cm](9.4,1.508555)(9.4,0.45886236)
\psline[linewidth=0.04cm](10.38,1.508555)(10.38,0.45886236)
\psellipse[linewidth=0.04,dimen=outer](9.89,0.45886236)(0.49,0.1499561)
\psline[linewidth=0.04cm](16.1,1.4889143)(16.1,0.63140726)
\psline[linewidth=0.04cm](17.08,1.4889143)(17.08,0.63140726)
\psellipse[linewidth=0.04,dimen=outer](16.59,0.63140726)(0.49,0.122501)
\psframe[linewidth=0.06,linecolor=color4108,dimen=outer,fillstyle=solid,fillcolor=color4108b](6.26,2.6089063)(5.04,1.4889063)
\pscircle[linewidth=0.06,dimen=outer,fillstyle=solid,fillcolor=color53b](1.63,1.9989063){0.49}

\usefont{T1}{ptm}{m}{n}
\rput(1.6435938,1.9689063){\footnotesize $2\;4$}
\pscircle[linewidth=0.06,dimen=outer,fillstyle=solid,fillcolor=color53b](2.63,1.9589063){0.49}

\usefont{T1}{ptm}{m}{n}
\rput(2.6035938,1.6889062){\footnotesize $2$}

\usefont{T1}{ptm}{m}{n}
\rput(2.6035938,2.0689063){\footnotesize $5\;3$}
\pscircle[linewidth=0.06,dimen=outer,fillstyle=solid,fillcolor=color53b](3.59,1.9789063){0.49}

\usefont{T1}{ptm}{m}{n}
\rput(3.5635939,1.9889063){\footnotesize $5\;6$}
\pscircle[linewidth=0.06,dimen=outer,fillstyle=solid,fillcolor=color53b](4.55,1.9789063){0.49}
\pscircle[linewidth=0.06,dimen=outer,fillstyle=solid,fillcolor=color53b](0.63,1.9989063){0.49}

\usefont{T1}{ptm}{m}{n}
\rput(0.6435937,1.9889063){\footnotesize $2$}
\pscircle[linewidth=0.06,dimen=outer,fillstyle=solid,fillcolor=color4062b](7.31,-1.5410937){0.49}
\pscircle[linewidth=0.06,dimen=outer,fillstyle=solid,fillcolor=color4062b](6.49,-1.0610938){0.49}
\pscircle[linewidth=0.06,dimen=outer,fillstyle=solid,fillcolor=color4062b](7.33,-0.60109377){0.49}
\pscircle[linewidth=0.06,dimen=outer,fillstyle=solid,fillcolor=color4062b](9.53,-1.5610938){0.49}
\pscircle[linewidth=0.06,dimen=outer,fillstyle=solid,fillcolor=color4062b](10.33,-1.0410937){0.49}
\pscircle[linewidth=0.06,dimen=outer,fillstyle=solid,fillcolor=color4062b](12.49,-1.5610938){0.49}
\rput{-90.0}(12.511094,12.828906){\pscircle[linewidth=0.06,dimen=outer,fillstyle=solid,fillcolor=color4062b](12.67,0.15890625){0.49}}
\pscircle[linewidth=0.06,dimen=outer,fillstyle=solid,fillcolor=color53b](5.67,1.9989063){0.49}
\psline[linewidth=0.06cm,linestyle=dotted,dotsep=0.16cm](13.94,-1.0310937)(15.12,-1.0510937)

\usefont{T1}{ptm}{m}{n}
\rput(4.543594,2.1689062){\footnotesize $2\;6$}

\usefont{T1}{ptm}{m}{n}
\rput(5.6835938,2.0089064){\footnotesize $7$}

\usefont{T1}{ptm}{m}{n}
\rput(6.483594,-0.89109373){\footnotesize $3\;5$}

\usefont{T1}{ptm}{m}{n}
\rput(6.483594,-1.2710937){\footnotesize $1\;8$}

\usefont{T1}{ptm}{m}{n}
\rput(7.3435936,-0.6510937){\footnotesize $1\;6$}

\usefont{T1}{ptm}{m}{n}
\rput(7.3035936,-1.5310937){\footnotesize $5\;9$}

\usefont{T1}{ptm}{m}{n}
\rput(12.743594,0.10890625){\footnotesize $5\;9$}

\usefont{T1}{ptm}{m}{n}
\rput(10.343594,-1.0910938){\footnotesize $2\;4$}

\usefont{T1}{ptm}{m}{n}
\rput(12.503593,-1.6310937){\footnotesize $2\;6$}

\usefont{T1}{ptm}{m}{n}
\rput(9.503593,-1.5710938){\footnotesize $8$}
\psline[linewidth=0.04cm](12.179969,1.4892693)(12.179969,0.54362833)
\psline[linewidth=0.04cm](13.1208,1.4892693)(13.1208,0.54362833)
\psellipse[linewidth=0.04,dimen=outer](12.650384,0.54362833)(0.47041565,0.13509157)

\usefont{T1}{ptm}{m}{n}
\rput(4.563594,1.7889062){\footnotesize $7\;4$}

\usefont{T1}{ptm}{m}{n}
\rput(6.961406,-2.4060938){$P_1$}

\usefont{T1}{ptm}{m}{n}
\rput(9.741406,-2.4260938){$P_2$}

\usefont{T1}{ptm}{m}{n}
\rput(12.641406,-2.4260938){$P_3$}

\usefont{T1}{ptm}{m}{n}
\rput(16.501406,-2.3860939){$P_k$}
\psline[linewidth=0.04cm,arrowsize=0.05291667cm 2.0,arrowlength=1.4,arrowinset=0.4]{->}(5.28,1.6489062)(4.44,0.86890626)

\usefont{T1}{ptm}{m}{n}
\rput(3.2895312,0.77390623){Online Algorithm $\mathcal{A}$}
\psline[linewidth=0.04cm](0.0,1.4889063)(6.3,1.4889063)
\psline[linewidth=0.04cm](7.28,1.4889063)(9.38,1.4889063)
\psline[linewidth=0.04cm](10.36,1.4889063)(12.18,1.4889063)
\psline[linewidth=0.04cm](13.12,1.4889063)(16.1,1.4889063)
\psline[linewidth=0.04cm](17.08,1.4889063)(17.64,1.4889063)
\psline[linewidth=0.04cm](17.64,1.4889063)(17.64,1.6289062)

\usefont{T1}{ptm}{m}{n}
\rput(16.423594,1.8289063){\footnotesize $9\;6$}
\end{pspicture}
}
\caption{An online allocation $\mathcal{M}_\mathcal{A}$ by an online algorithm $\mathcal{A}$ for the universe $U=\{1,2\ldots, 9\}$ and a sequence of subsets $\mathbf{S}$ depicted by balls. Notice that $P_1$ requires the elements $2$, $4$ and $7$ to form a set cover. Also, red balls can no longer be reallocated. $T(\mathcal{M}^*_{off}, \mathbf{S})=2$ and $T(\mathcal{M}_{\mathcal{A}}, \mathbf{S})\leq 1$, since $F_1=2$. Note also that the online algorithm may choose to either place a ball along with others in a pre-existing bin, or create another bin.} \label{ballsbins}
\end{figure*}

\section{Model and Preliminaries}
The universe of \emph{elements} is represented by $U=\{1,2,3,\ldots,n-1,n\}$, unless stated otherwise. We denote the collection of subsets provided by $\mathcal{S}=\{S_1,S_2,\ldots\}$, where $S_j\subseteq U$ $\forall$ $j$. The number of subsets is therefore $|\mathcal{S}|$. We define a set cover as a collection of subsets $C\subseteq\mathcal{S}$ such that $\cup_{S_i\in C}S_i=U$. We assume that at least one set cover exists in $\mathcal{S}$, i.e., $\cup_{S_i\in\mathcal{S}}S_i=U$.

Note that we are interested in an \emph{online} scenario, in which subsets arrive one at a time, and so the order of the subsets becomes important. In order to take this into account, we define an ordered tuple of subsets, which we will call a subset sequence, by $\mathbf{S}=[S_1,S_2,\ldots]$. Note that a permutation of $\mathbf{S}$ is distinct from $\mathbf{S}$. When $\mathbf{S}$ is mentioned in the context of an offline algorithm however, it is equivalent to $\mathcal{S}$, since all subset arrivals are known in advance to an offline algorithm. We also define the \emph{concatenation} of two subset sequences $\mathbf{S}_a=[S^a_1,S^a_2,\ldots, S^a_x]$ and $\mathbf{S}_b=[S^b_1,S^b_2,\ldots, S^b_y]$ by $\mathbf{S}_a\wedge \mathbf{S}_b = [S^a_1,S^a_2,\ldots, S^a_x, S^b_1,S^b_2,\ldots, S^b_y]$. Similarly, $\mathbf{S}\wedge S$ for some subset $S$ corresponds to adding $S$ to the end of subset sequence $\mathbf{S}$. For the creation of $\mathbf{S}$ from subsets, we use the familiar ordered set notation, e.g. $\mathbf{S}=[S_j:S_j = \{j\}, j\in U]$ is effectively $\mathbf{S}=[\{1\},\{2\},\ldots, \{n\}]$.

Define an \emph{allocation} $\mathcal{M}$ as a partition of $\mathcal{S}$ into $P_1,P_2,\ldots$. An allocation made by an algorithm $\mathcal{A}$ on a sequence of subsets $\mathbf{S}$ is denoted by $\mathcal{M}_{\mathcal{A}}(\mathbf{S})$. We would like to make an allocation $\mathcal{M}$ such that the maximum number of $P_i$s form set covers. This is equivalent to solving the \emph{maximum disjoint set cover problem} (DSCP) \cite{bagaria2013optimally}, defined as follows:

\begin{definition}[DSCP \cite{bagaria2013optimally}] \label{def_DSCP}
Given a universe $U$ and a set of subsets $\mathcal{S}$, find as many set covers $C$ as possible such that they are all pairwise disjoint (i.e. $C_i\cap C_j = \phi$ $\forall$ $i\neq j$).
\end{definition}

In the \emph{offline} case of the DSCP, an allocation is made with knowledge of $\mathcal{S}$ in its entirety.
However, in this paper, we are interested in the scenario in which the algorithm is forced to make an allocation \emph{online}. An online algorithm must assign a subset to one of the partitions $P_i$ as soon as it arrives, and this assignment cannot be changed subsequently. The objective is to maximize the number of $P_i$s that form set covers at the end of all subset arrivals, relative to the optimal offline algorithm.

As an analogy, one can picture the online version of the DSCP as in Figure \ref{ballsbins}. Subsets are represented by balls containing elements from $U$, and partitions by bins. The online algorithm has the objective of assigning balls to bins, or in other words, making an allocation. The balls arrive one by one, and the algorithm must drop each into a particular bin as and when it arrives. The objective is to ensure that after all arrivals, the number of bins that form set covers is maximized, relative to the optimal offline allocation.

Define $T(\mathcal{M},\mathbf{S})$ as the number of $P_i$s that form set covers after an online allocation $\mathcal{M}$ is made on a subset sequence $\mathbf{S}$. We also define $\mathcal{M}^*_{off}(\mathbf{S})$ as the offline allocation that maximizes $T(\mathcal{M},\mathbf{S})$, and call this maximum $T(\mathcal{M}^*_{off},\mathbf{S})$.

Since, $\cup_{S_i\in\mathbf{S}}S_i=U$, the following is trivially established:
\begin{equation}
T(\mathcal{M}^*_{off},\mathbf{S})\geq 1.
\end{equation}

To analyse the performance of online algorithms for the DSCP, a figure of merit is the \emph{competitive ratio} \cite{borodin1998online}. In this paper, we will denote the competitive ratio of an online algorithm $\mathcal{A}$ on a sequence of subsets $\mathbf{S}$ by $\mu_{\mathbf{S}}(\mathcal{A})$, defined as
\begin{equation}
\mu_{\mathbf{S}}(\mathcal{A})=\frac{T(\mathcal{M}^*_{off},\mathbf{S})}{T(\mathcal{M}_{\mathcal{A}},\mathbf{S})}.
\end{equation}
Note that $\mu_{\mathbf{S}}(\mathcal{A})\geq 1$ $\forall$ $\mathbf{S}, \mathcal{A}$. Also define the worst case competitive ratio $\mu(\mathcal{A})$ of an algorithm $\mathcal{A}$ as its highest competitive ratio over all subset sequences.
\begin{equation}
\mu(\mathcal{A})=\max_{\mathbf{S}}\mu_{\mathbf{S}}(\mathcal{A})=\max_{\mathbf{S}}\frac{T(\mathcal{M}^*_{off},\mathbf{S})}{T(\mathcal{M}_{\mathcal{A}},\mathbf{S})}. \label{compratio}
\end{equation} 

Our objective is to design online algorithms with minimum worst-case competitive ratio. From hereon, the term competitive ratio will be used to mean worst case competitive ratio, unless mentioned otherwise. A good online algorithm $\mathcal{A}$ will have $\mu(\mathcal{A})$ close to unity; an online algorithm $\mathcal{A}$ with $\mu(\mathcal{A})$ increasing with the input parameters is not desirable.

We now introduce some notation that will be used in the rest of this paper. Given $U$ and $\mathcal{S}$ (or $\mathbf{S}$), it is useful to define the frequency $F_i(\mathcal{S})$ of element $i\in U$ as the number of subsets in $\mathcal{S}$ that it appears in, i.e. $F_i(\mathcal{S})=\#\{S_j\in \mathcal{S}:i\in S_j\}$. We also define $F_{min}(\mathcal{S})=\min_iF_i(\mathcal{S})$. In order to simplify notation, we will use just $F_i$ and $F_{min}$ when there is no ambiguity about the sequence of subsets under consideration.

It is easy to see that $F_{min}$ is an upper bound on the optimal solution of the DSCP, since each set cover must contain the element with frequency $F_{min}$ and all set covers must be disjoint. Therefore, for all algorithms $\mathcal{A}$:
\begin{equation}
T(\mathcal{M}_{\mathcal{A}},\mathbf{S})\leq T(\mathcal{M}^*_{off},\mathbf{S})\leq F_{min}. \label{upperbound}
\end{equation}

Finding $T(\mathcal{M}^*_{off},\mathbf{S})$ optimally for arbitrary $\mathbf{S}$ (or providing an offline algorithm to the DSCP) is NP-complete \cite{cardei2005improving}. In this paper, however, we will calculate the competitive ratio of our online algorithms $\mu$ without restricting the optimal offline algorithm to run in polynomial time. We will therefore take $T(\mathcal{M}^*_{off},\mathbf{S})$ to be the number of disjoint set covers returned by the optimal offline algorithm, notwithstanding its time complexity. However, finding even an expression for $T(\mathcal{M}^*_{off},\mathbf{S})$ combinatorially is not tractable, and so finding $\mu$ exactly is difficult. We will therefore use \eqref{upperbound} to bound $\mu$.

We will now analyse in the next section exactly how much information an online DSCP algorithm would need in advance in order for it to perform with a non-trivial competitive ratio.
\section{Essential Information for an Online DSCP Algorithm}
Before going into the crux of this section, we first introduce the trivial online algorithm \texttt{GreedyCover} $\mathcal{V}$.
\begin{definition}[\texttt{GreedyCover} $\mathcal{V}$]
The allocation $\mathcal{M}_{\mathcal{V}}$ is made such that it always completes a set cover before moving on to the next one. In other words, each incoming subset is placed in partition $P_i$ until $P_i$ becomes a set cover, after which the next subset is placed in $P_{i+1}$. \label{trivalgo}
\end{definition}  

\begin{lemma}
The competitive ratio of $\mathcal{V}$, $\mu(\mathcal{V})\leq F_{min}$. \label{trivial}
\end{lemma}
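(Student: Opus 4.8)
The plan is to combine the universal upper bound on the offline optimum from \eqref{upperbound} with a trivial lower bound of $1$ on the number of set covers produced by \texttt{GreedyCover}. By the definition \eqref{compratio}, $\mu(\mathcal{V}) = \max_{\mathbf{S}} T(\mathcal{M}^*_{off},\mathbf{S})/T(\mathcal{M}_{\mathcal{V}},\mathbf{S})$, and \eqref{upperbound} already gives $T(\mathcal{M}^*_{off},\mathbf{S}) \le F_{min}$ for every admissible $\mathbf{S}$. Hence it suffices to show that $T(\mathcal{M}_{\mathcal{V}},\mathbf{S}) \ge 1$ for every $\mathbf{S}$ with $\cup_{S_j\in\mathbf{S}} S_j = U$, i.e. that $\mathcal{V}$ always forms at least one set cover.

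First I would unwind the definition of $\mathcal{V}$: it places every incoming subset into $P_1$ and only begins filling $P_2$ once $P_1$ is already a set cover, and more generally begins $P_{i+1}$ only after $P_i$ is a cover. Fix an arbitrary subset sequence $\mathbf{S}$ (recall the paper assumes $\cup_{S_j\in\mathbf{S}} S_j = U$ throughout). If during the run $\mathcal{V}$ ever places a subset outside $P_1$, then by construction $P_1$ was completed as a set cover, so $T(\mathcal{M}_{\mathcal{V}},\mathbf{S}) \ge 1$. Otherwise every subset of $\mathbf{S}$ ends up in $P_1$, so the union of the subsets in $P_1$ equals $\cup_{S_j\in\mathbf{S}} S_j = U$, and again $P_1$ is a set cover. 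In either case $T(\mathcal{M}_{\mathcal{V}},\mathbf{S}) \ge 1$.

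Combining the two bounds, for every $\mathbf{S}$ we get $\mu_{\mathbf{S}}(\mathcal{V}) = T(\mathcal{M}^*_{off},\mathbf{S})/T(\mathcal{M}_{\mathcal{V}},\mathbf{S}) \le F_{min}/1 = F_{min}$, and taking the maximum over $\mathbf{S}$ gives $\mu(\mathcal{V}) \le F_{min}$. There is essentially no hard step here; the only subtlety is the degenerate case in which $\mathcal{V}$ never opens a second bin, which is precisely where the standing assumption $\cup_{S_j\in\mathbf{S}} S_j = U$ is invoked. One could additionally remark that $F_{min}$ is genuinely attainable as a worst case for $\mathcal{V}$, which is what motivates the search for the $\BigO{\ln n}$-competitive algorithm later in the paper, but that is not needed for the statement as given.
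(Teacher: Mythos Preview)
Your proof is correct and follows essentially the same approach as the paper: use \eqref{upperbound} to bound the numerator by $F_{min}$ and the standing assumption $\cup_{S_j\in\mathbf{S}} S_j = U$ to conclude $T(\mathcal{M}_{\mathcal{V}},\mathbf{S}) \ge 1$. The only difference is that you spell out the case analysis for why $\mathcal{V}$ always completes $P_1$, which the paper asserts in a single sentence.
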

\begin{proof}
Algorithm $\mathcal{V}$ will always return at least one set cover, since $\cup_{S_i\in\mathbf{S}}S_i=U$ $\forall$ $\mathbf{S}$. In other words, $T(\mathcal{M}_{\mathcal{V}},\mathbf{S}) \geq 1$ $\forall$ $\mathbf{S}$. The proof of Lemma \ref{trivial} follows from \eqref{upperbound}, since $F_{min}$ is an upper bound on $T(\mathcal{M}^*_{off},\mathbf{S})$.
\end{proof}

Similarly, it is clear that $\mu(\mathcal{A})\leq F_{min}$ for all algorithms $\mathcal{A}$ that always produce at least one set cover. We will now show that no online algorithm can perform better than algorithm $\mathcal{V}$ with prior knowledge of just $U$ and $|\mathcal{S}|$.
\begin{theorem}
The competitive ratio of any online algorithm  is lower bounded by $F_{min}$, i.e., $\mu(\mathcal{A})\geq F_{min}$ even when $\mathcal{A}$ is given the universe $U$ and number of subsets $|\mathcal{S}|$ a-priori.  \label{minreqd}
\end{theorem}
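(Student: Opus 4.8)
The plan is to show that for an arbitrary online algorithm $\mathcal{A}$ that knows only $U$ (with $n=|U|\ge 2$) and $m=|\mathcal{S}|$, the adversary can build a subset sequence $\mathbf{S}$ of length $m$ on which $T(\mathcal{M}_{\mathcal{A}},\mathbf{S})=1$ while $T(\mathcal{M}^*_{off},\mathbf{S})=F_{min}(\mathbf{S})$. Combined with \eqref{upperbound}, which already gives $T(\mathcal{M}_{\mathcal{A}},\mathbf{S})\le T(\mathcal{M}^*_{off},\mathbf{S})\le F_{min}$, this forces $\mu_{\mathbf{S}}(\mathcal{A})=F_{min}$, and hence $\mu(\mathcal{A})\ge F_{min}$; since the construction lets $F_{min}(\mathbf{S})$ grow with $m$, this is exactly the statement that, lacking $F_{min}$, one cannot improve on the trivial algorithm $\mathcal{V}$.

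The sequence is produced adaptively: the adversary reveals subsets one at a time and picks the next one as a function of the (irrevocable) partition assignments $\mathcal{A}$ has made so far. The backbone of $\mathbf{S}$ consists of repeated \emph{rounds}, a round being a block of singletons $\{1\},\{2\},\dots,\{n\}$; since every element of $U$ then occurs exactly once per round, the offline optimum is simply the number of rounds, so $T(\mathcal{M}^*_{off},\mathbf{S})=F_{min}(\mathbf{S})$ holds automatically by construction. The real work is to deform this backbone, on the fly, so that $\mathcal{A}$ is trapped with a single completed partition.

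The argument splits on whether $\mathcal{A}$ tends to concentrate arriving subsets in a few partitions or to spread them out. If $\mathcal{A}$ concentrates (the extreme case being \texttt{GreedyCover} $\mathcal{V}$ itself), the adversary lengthens the current singleton block: it keeps sending $\{i\}$ until $\mathcal{A}$ has irrevocably dumped many copies of element $i$ into a small set of partitions, after which the adversary sends the rest of the elements and exhausts the budget $m$, so that only the partitions already holding element $i$ can be closed into covers --- at most one. If instead $\mathcal{A}$ spreads a block across many partitions, the adversary truncates that block, which makes the element just used genuinely rare; this caps $F_{min}$, and hence the offline optimum, at the (small) number of partitions $\mathcal{A}$ has opened, while the few subsets left in the budget are still placed adversarially so that $\mathcal{A}$ can assemble no more than one complete cover from its scattered, mostly unfinished partitions. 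In both branches $\mathcal{A}$ finishes with exactly one partition equal to $U$, and every element ends with the same frequency $F_{min}$.

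I expect the spreading branch to be the main obstacle: one must argue that hedging by opening many partitions cannot rescue $\mathcal{A}$, because whatever it subsequently does, the residual budget of $m$ subsets --- delivered in an order chosen by the adversary --- is too small to close a second partition. Making this precise calls for an invariant that tracks, after each arrival, how many of $\mathcal{A}$'s partitions are still ``completable'' given the subsets left to be revealed, together with a rule for choosing each next subset that drives this count down to one while keeping every element's final frequency equal to the intended common value $F_{min}$. Once that invariant is maintained, the chain $1=T(\mathcal{M}_{\mathcal{A}},\mathbf{S})\le T(\mathcal{M}^*_{off},\mathbf{S})=F_{min}$ closes the proof.
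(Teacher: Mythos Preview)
Your proposal is a plan rather than a proof, and the plan has a structural flaw. Every subset your adversary actually sends is a singleton: the ``backbone'' is blocks of $\{1\},\dots,\{n\}$, and the two branches you describe only lengthen or truncate those blocks. But against any singleton-only sequence there is an online algorithm with competitive ratio $1$: place the $k$-th arriving copy of $\{i\}$ into partition $P_k$. After all arrivals, $P_k$ is a set cover for every $k\le F_{min}$, so $T(\mathcal{M}_{\mathcal{A}},\mathbf{S})=F_{min}=T(\mathcal{M}^*_{off},\mathbf{S})$. Your ``spreading branch'' is precisely this algorithm, and the adversary you describe cannot hurt it: truncating a block makes one element rare, but the algorithm has already distributed that element's copies across distinct partitions and will continue to do so for every later element. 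The invariant you hope for --- that the number of completable partitions can be driven to one --- is simply false for this $\mathcal{A}$, so the gap you flag in the spreading branch is not a detail to be filled in but the point where the argument fails.

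The paper's proof is short and uses no adaptive construction at all. It fixes two sequences $\mathbf{S}_1,\mathbf{S}_2$ sharing the common prefix $\mathbf{S}_{com}=[\{1,2\},\{1,3\},\dots,\{1,n\}]$ and classifies algorithms by a single bit: does $\mathcal{A}$ put all of $\mathbf{S}_{com}$ into one partition? If yes, then on $\mathbf{S}_2$ (prefix followed by the complements $U\setminus\{1,j\}$) every occurrence of element $1$ is trapped in that one partition, so $\mathcal{A}$ gets one cover while offline pairs each $\{1,j\}$ with $U\setminus\{1,j\}$ to obtain $n-1=F_{min}$ covers. If no, then on $\mathbf{S}_1$ (prefix followed only by copies of $\{1\}$) each element $j\ge 2$ lies in exactly one subset of $\mathbf{S}_{com}$, so every partition that received only part of $\mathbf{S}_{com}$ is missing some $j\ge 2$ forever and $\mathcal{A}$ gets zero covers. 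The non-singleton shape of $\mathbf{S}_{com}$ --- element $1$ entangled with every other element --- is exactly what a singleton backbone cannot reproduce.
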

\begin{proof}
We will present two sequences of subsets $\mathbf{S}_1$ and $\mathbf{S}_2$ such that $\min_{\mathcal{A}}\max\Big(\mu_{\mathbf{S}_1}(\mathcal{A}), \mu_{\mathbf{S}_2}(\mathcal{A})\Big)= F_{min}$, where the minimum is taken over all online algorithms $\mathcal{A}$. This would serve as a proof of Theorem \ref{minreqd}, by \eqref{compratio}.

Let the universe be $U= \{1,2,\ldots, n\}$. This is given to the online algorithm a-priori, along with $|\mathcal{S}|$. Therefore, $|\mathbf{S}_1|=|\mathbf{S}_2|=|\mathcal{S}|$ is fixed. Let $|\mathcal{S}|$ be sufficiently large.

Define $\mathbf{S}_{com}= [\{1,2\}, \{1,3\}, \{1,4\},\ldots, \{1,n\}]$. Note that $|\mathbf{S}_{com}|=n-1$. Also define $\mathbf{S}_{1r}= [\{1\},\{1\},\ldots |\mathcal{S}|-(n-1) \text{ times}]$. Now, let $\mathbf{S}_1= \mathbf{S}_{com}\wedge \mathbf{S}_{1r}$. Note that $T(\mathcal{M}^*_{off}, \mathbf{S}_1) = 1$, since a set cover can be obtained by the combination of all the subsets in  $\mathbf{S}_{com}$.

Now, let $\mathbf{S}_{2r}=[S_j:S_j = U\setminus \{1,j+1\}, 1\leq j \leq n-1]$. Also define $\mathbf{S}_{2e}= [\{2\},\{2\},\ldots |\mathcal{S}|-(2n-2) \text{ times}]$. Let $\mathbf{S}_2= \mathbf{S}_{com}\wedge \mathbf{S}_{2r}\wedge \mathbf{S}_{2e}$. It is clear that $T(\mathcal{M}^*_{off}, \mathbf{S}_2) = n-1 = F_{min}(\mathbf{S}_2)$, since the $j$th subset of $\mathbf{S}_{2r}$ is the complement of the $j$th subset in $\mathbf{S}_{com}$ with respect to $U$.

Notice that the first $n-1$ subsets of both $\mathbf{S}_1$ and $\mathbf{S}_2$ are identical, and represented by the sequence $\mathbf{S}_{com}$. Consider two classes of online algorithms represented by $\mathcal{X}$ and $\mathcal{Y}$. Let any algorithm $\mathcal{A}\in \mathcal{X}$ make an allocation $\mathcal{M}_{\mathcal{A}}$ such that all subsets in $\mathbf{S}_{com}$ end up in the same partition, and let any algorithm $\mathcal{B}\in \mathcal{Y}$ make an allocation $\mathcal{M}_{\mathcal{A}}$ such that all subsets in $\mathbf{S}_{com}$ do not end up in the same partition. Note that the classes $\mathcal{X}$ and $\mathcal{Y}$ are disjoint and together span all online algorithms for the DSCP. The following relations become immediately clear:

\begin{align}
\mu_{\mathbf{S}_1}(\mathcal{A}) &= 1 \text{  }\forall \text{ }\mathcal{A} \in \mathcal{X}, \nonumber \\
\mu_{\mathbf{S}_1}(\mathcal{B}) &= \infty \text{  }\forall \text{ }\mathcal{B} \in \mathcal{Y}, \nonumber\\
\mu_{\mathbf{S}_2}(\mathcal{A}) &= F_{min} \text{  }\forall \text{ }\mathcal{A} \in \mathcal{X}, \nonumber\\
\mu_{\mathbf{S}_2}(\mathcal{B}) &= 1 \text{ for some }\mathcal{B} \in \mathcal{Y}. \nonumber
\end{align}
We therefore arrive at the fact that 
\begin{align}
\min_{\mathcal{A}\in \mathcal{X}} \max\Big(\mu_{\mathbf{S}_1}(\mathcal{A}), \mu_{\mathbf{S}_2}(\mathcal{A})\Big) &= F_{min}, \\
\min_{\mathcal{B}\in \mathcal{Y}} \max\Big(\mu_{\mathbf{S}_1}(\mathcal{B}), \mu_{\mathbf{S}_2}(\mathcal{B})\Big) &= \infty.
\end{align}
This shows that
\begin{equation}
\min_{\mathcal{C}} \max\Big(\mu_{\mathbf{S}_1}(\mathcal{C}), \mu_{\mathbf{C}_2}(\mathcal{C})\Big) = F_{min},
\end{equation}
where the minimum over $\mathcal{C}=\mathcal{X}\cup \mathcal{Y}$ is taken over all online algorithms for the DSCP. The proof is therefore complete. We also point out here that $F_{min}(\mathbf{S}_1)=1$ and $F_{min}(\mathbf{S}_2)=n-1$, so if the online algorithm was provided with $F_{min}$ it could have chosen to use just one partition for $\mathbf{S}_1$ and $n-1$ partitions for $\mathbf{S}_2$, thus improving its worst case competitive ratio.
\end{proof}

No online algorithm can therefore outperform the trivial \texttt{GreedyCover} algorithm $\mathcal{V}$ (Definition \ref{trivalgo}) with knowledge of just $U$ and $|\mathcal{S}|$, from Lemmas \ref{trivial} and \ref{minreqd}. One can similarly show that even if $\max_iF_i$ or $\frac{1}{n}\sum_iF_i$ are provided along with $U$ to an online algorithm, its competitive ratio is lower bounded by $F_{min}$. We will therefore assume the following.
\begin{remark}
From hereon, all online algorithms know $U$ and $F_{min}$ a-priori. \label{mininf}
\end{remark}
In the next section, we we will construct an online algorithm that is provided $F_{min}$ in advance and performs with a non-trivial competitive ratio.
\section{An Online DSCP Algorithm}
We will use the offline DSCP algorithm of \cite{bagaria2013optimally}, and make modifications such that it becomes an online algorithm. We will first present a few ideas and the algorithm from \cite{bagaria2013optimally}.

\subsection{Hypergraph representation}
The universe $U$ and set of subsets $\mathcal{S}$ can also be represented as a hypergraph, as shown in \cite{bollobas2013cover}. Define a hypergraph $\mathcal{H}_{U,\mathcal{S}}(V,E)$ representing a universe $U$ and subsets $\mathcal{S}$, having vertex set $V$ and hyperedge set $E$ as follows:
\begin{definition}[$\mathcal{H}_{U,\mathcal{S}}(V,E)$]
Each subset $S_j\in \mathcal{S}$ is represented by a vertex $v_j\in V$. A hyperedge $e_i\in E$ contains vertex $v_j$ if element $i\in U$ is such that $i\in S_j$.
\end{definition}
While dealing with a sequence of subsets $\mathbf{S}$, we will denote the corresponding hypergraph representation by $\mathcal{H}_{U,\mathbf{S}}$.

Figure \ref{Dual} illustrates an example construction of $\mathcal{H}_{U,\mathcal{S}}(V,E)$ from $U$ and $\mathcal{S}$. Note that there are $|\mathcal{S}|$ vertices and $n$ hyperedges in $\mathcal{H}_{U,\mathcal{S}}(V,E)$. The set of vertices in hyperedge $e_i$ is denoted by $V(e_i)$, and so $|V(e_i)|=F_{i}$. Hence, the smallest number of vertices in any hyperedge $\min_i|V(e_i)|=F_{min}$.
\begin{figure}[H]
\centering
	\hspace{10mm}
    \psfrag{1}[][l][0.9]{$1$}
    \psfrag{2}[][r][0.9]{$2$}
    \psfrag{3}[][][0.9]{$3$}
    \psfrag{4}[][b][0.9]{$4$}
    \psfrag{b}[][b][0.9]{ }
    \psfrag{\{1,2,4\}}[][][.55]{$\{1,2,4\}$}
    \psfrag{\{1,3\}}[][][.6]{$\{1,3\}$}
    \psfrag{\{2,3\}}[][][.6]{$\{2,3\}$}
    \includegraphics[width=.25\textwidth]{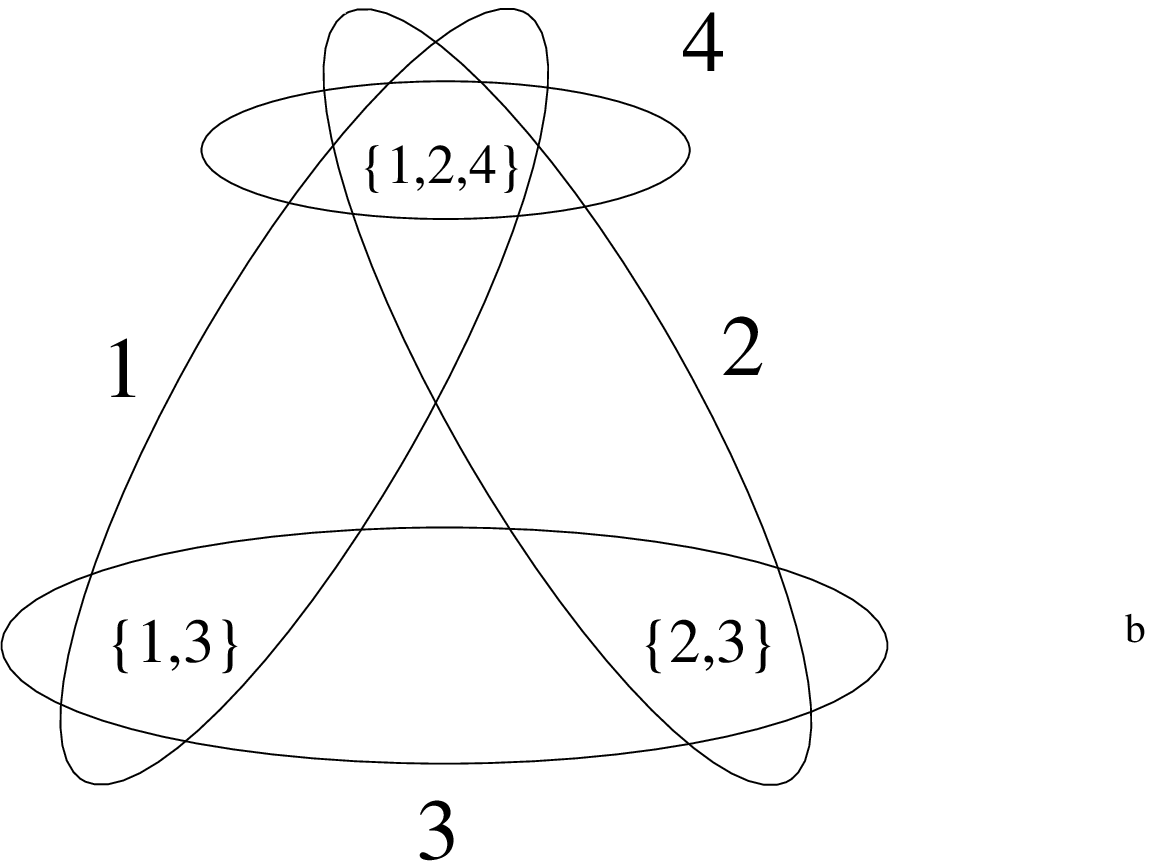} \\
  \caption{Hypergraph $\mathcal{H}_{U,\mathcal{S}}$ for $\mathcal{U} = \{ 1,2,3,4\} $ and $ \mathcal{S} = \{ \{ 1,2,4\}, \{2,3\}, \{1,3\} \}$} \label{Dual}
\end{figure} 
The DSCP on $U$ and $\mathcal{S}$ can be solved offline on $\mathcal{H}_{U,\mathcal{S}}$ by an operation known as \emph{polychromatic colouring}.
\begin{definition}[Polychromatic Colouring]
Colour the vertices of the hypergraph with the maximum number of colours such that each hyperedge contains vertices of all colours.
\end{definition}
Note that each colour in a polychromatic colouring of $\mathcal{H}_{U,\mathcal{S}}$ corresponds to a set cover of $U$ using subsets in $\mathcal{S}$. The fact that vertices must have different colours forces the set covers to be disjoint, and maximizing the number of colours maximizes the number of disjoint set covers, i.e., solves the DSCP.

\subsection{Deterministic Offline Algorithm \cite{bagaria2013optimally}} \label{sec_off}
The offline algorithm in \cite{bagaria2013optimally} solves the DSCP through the polychromatic colouring of $\mathcal{H}_{U,\mathcal{S}}$. Let us first refresh some notation from \cite{bagaria2013optimally}, and then present a slightly different exposition of the algorithm from \cite{bagaria2013optimally}. Let $V(e)$ denote the set of vertices in hyperedge $e$. An incomplete colouring of a hypergraph is one in which there exist colours that are not present in all hyperedges. These colours are said to be \emph{invalid}, since the subset collections that they correspond to do not form set covers. Let us colour the hypergraph using $\ell= F_{min}/\ln (n\ln n)$ colours, using the set $[\ell]$. Given an incomplete colouring of a hypergraph using colours in set $[\ell]$, we denote by random variable $L$ the number of invalid colours. We also define an indicator random variable $X_c$ for each $c\in [\ell]$, which is $1$ if colour $c$ is invalid and $0$ otherwise. The following relation is readily established, as
\begin{equation}
L = \sum_{c\in [\ell]}X_c. \label{eq1}
\end{equation}
We also define another indicator random variable $Y_{e,c}$ for each hyperedge-colour pair $(e\in E, c \in [\ell])$, which is $1$ if hyperedge $e$ does not contain any vertex coloured with colour $c$, and $0$ otherwise. The relation between $X_c$ and $Y_{e,c}$ is as follows:
\begin{equation}
X_c \leq \sum_{e\in E} Y_{e,c}, \text{     }\forall \text{ }c\in [\ell]. \label{eq2}
\end{equation}
And so, by \eqref{eq1} and \eqref{eq2},
\begin{equation}
L \leq \sum_{c\in [\ell]}\sum_{e\in E} Y_{e,c}. \label{rv}
\end{equation}
The offline algorithm of \cite{bagaria2013optimally} \texttt{PolyOff} operates in two phases. In Phase I, it colours all vertices uniformly randomly using $\ell = F_{min}/ \ln \ln n$ colours. At this point, note from \eqref{rv} that
\begin{align}
\mathbf{E}[L] \leq \sum_{c\in [\ell]}\sum_{e\in E} \mathbf{P}[Y_{e,c}=1] &\leq n\cdot\ell\cdot(1-\frac{1}{\ell})^{|V(e)|}, \nonumber\\
									&\leq n\ell e^{-|V(e)|/\ell} \leq n\ell e^{-\ln (n\ln n)}, \nonumber\\
									&=\ell/\ln n. \label{invalid}
\end{align}
This essentially achieves a randomized polychromatic colouring with at most $\ell/\ln n$ invalid colours. After this comes Phase II of the \texttt{PolyOff} algorithm - the recolouring phase - which works as follows to achieve a deterministic colouring. Order the vertices arbitrarily as $v_1, v_2,\ldots$ and \emph{recolour} them in this order. Let vertex $v_i$ be recoloured by colour $c_i$. Let $|V^{u}(e)|$ denote the number of vertices in hyperedge $e$ that have not been recoloured. Now for all hyperedges $e$, the probability that $e$ does not contain colour $c$ given that the vertices $v_{1}, \ldots, v_{i}$ have been recoloured with colours $c_{1}, \ldots, c_{i}$ is $0$ if there exists a vertex in $e$ which has already been recoloured with colour $c$. Otherwise, the probability is given by $\big(1-1/\ell\big)^{|V^{u}(e)|}$, since vertices $v_{i+1},\ldots,v_{|\mathcal{S}|}$ were each coloured uniformly randomly with one of the $\ell$ colours in Phase I. So,
\begin{equation} 
	\mathbf{P}[\,Y_{e,c}=1|c_{1}, c_{2}, \ldots, c_{i}]=
	\begin{cases}
	0 , \text{ if } \exists \text{ } q\leq i \text{ s.t. } v_{q}\in V(e) \\
	\qquad \qquad \qquad \text{ and } c_{q}=c\\
	\big(1-1/\ell\big)^{|V^{u}(e)|}  \text{ otherwise.}
	\end{cases} \label{probrecolour}
\end{equation}
After the vertices $v_1,v_2, \ldots, v_j$ are recoloured, we denote the conditional expectation of the number of invalid colours by $\mathbf{E}[L|c_1,c_2,\ldots, c_j]$. Note that 
\begin{equation*}
\mathbf{E}[\,L|c_{1},c_{2} ,\ldots, c_{i}]\leq \sum_{e\in E} \sum_{c\in [\ell]}\mathbf{P}\big[\,Y_{e,c}=1|c_{1}, c_{2}, \ldots, c_{i}\big].
\end{equation*}
Recolour $v_1$ uniformly randomly from $[\ell]$. Given that $v_1, \ldots, v_{i-1}$ have been recoloured, recolour $v_i$ such that
\vspace{-1mm}
\begin{equation}
\mathbf{E}[L|c_1, c_2,\ldots, c_{i-1}, c_i]\leq \mathbf{E}[L|c_1, c_2,\ldots, c_{i-1}] \label{recolour}
\vspace{-1mm}
\end{equation}
Note that such a recolouring exists, since due to the colouring of Phase I being uniformly random,
\vspace{-1mm}
\begin{equation*}
\mathbf{E}[L|c_1, c_2,\ldots, c_{i-1}]= (1/\ell)\sum_{c_i\in[\ell]} \mathbf{E}[L|c_1, c_2,\ldots, c_{i-1}, c_i],
\vspace{-1mm}
\end{equation*}
which is a convex combination. So there exists at least one colour $c_i$ for which \eqref{recolour} holds. Before recolouring (at the end of Phase I), \eqref{invalid} tells us that the number of invalid colours was less than $\ell/\ln n$, and we can ensure throughout the recolouring process that that number does not increase. Therefore, at the end of the algorithm, the number of colours that form a polychromatic colouring of $\mathcal{H}_{U,\mathcal{S}}$ is at least
\vspace{-1mm}
\begin{equation}
\ell-\frac{\ell}{\ln n} = \frac{F_{min}}{\ln n}\bigg(1 - \frac{\ln\ln n+1}{\ln(n\ln n)}\bigg). \label{actualapprox}
\vspace{-1mm}
\end{equation}
The vertices coloured with the invalid colours do not correspond to set covers. The \texttt{PolyOff} algorithm therefore obtains \linebreak \hbox{ $\frac{F_{min}}{\ln n}\Big(1 - o(1)\Big)$} disjoint set covers, where the $o(1)$ term is $<1/2$ and goes to zero as $n\rightarrow\infty$.

\vspace{-1mm}
\subsection{The Online Extension}  \label{onex}
Recall Remark \ref{mininf}, by which we will assume that all online algorithms know $F_{min}$ in advance. We will now extend ideas from the \texttt{PolyOff} algorithm presented in Section \ref{sec_off} to produce the \texttt{PolyOn} online algorithm.  

Before we do that however, we introduce a randomized online algorithm, $\texttt{RandColour}$, which colours each subset with one of $F_{min}/\ln n$ colours uniformly randomly on arrival. This effectively performs Phase I of the \texttt{PolyOff} algorithm on the incoming subsets and produces, in expectation, at least $F_{min}/\ln n\Big(1 - o(1)\Big)$ disjoint set covers, by \eqref{actualapprox}. 

We now make the following claim:
\begin{theorem}
There exists a deterministic online algorithm for the DSCP with competitive ratio $\ln n$. \label{existence}
\end{theorem}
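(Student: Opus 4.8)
The plan is to make the randomized \texttt{RandColour} algorithm deterministic \emph{online} by the method of conditional expectations, using a pessimistic estimator of the number of invalid colours that depends only on the arrivals seen so far together with the a-priori value $F_{min}$. Fix $\ell = F_{min}/\ln(n\ln n)$ (so $\ell = (1-o(1))F_{min}/\ln n$), and call the algorithm \texttt{PolyOn}. As the vertices $v_1,v_2,\ldots$ of $\mathcal{H}_{U,\mathbf S}$ arrive, \texttt{PolyOn} maintains, for every hyperedge $e$ (all $n$ of which are known in advance, being the elements of $U$), its current degree $k^e$ and the set of colours already used on it.

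For a partial colouring $c_1,\ldots,c_i$ of $v_1,\ldots,v_i$, define $\hat\Phi_i = \sum_{e\in E} q^e_i\,|M^e_i|$, where $M^e_i$ is the set of colours still absent from $e$ and $q^e_i = (1-1/\ell)^{\max(0,\,F_{min}-k^e_i)}$; the factor $q^e_i$ plays the role of the bound $(1-1/\ell)^{|V^u(e)|}$ in \eqref{probrecolour}, but with the unknown future degree of $e$ replaced by the worst case permitted by $F_{min}$. Algorithm \texttt{PolyOn} is then: on arrival of $v_i$, choose $c_i\in[\ell]$ maximising $W(c) := \sum_{e\ni v_i,\; c\in M^e_{i-1}} q^e_{i-1}$ (breaking ties by least index, so the algorithm is deterministic), place $v_i$ in partition $c_i$, and update the bookkeeping. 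I would then establish three facts. First, $\hat\Phi_0 \le \ell/\ln n$: initially every $q^e_0 = (1-1/\ell)^{F_{min}} \le e^{-F_{min}/\ell} = 1/(n\ln n)$ and $|M^e_0| = \ell$, which is exactly the estimate \eqref{invalid}. Second, once all vertices have arrived, $k^e = F_e \ge F_{min}$, so $q^e = 1$ and hence $\hat\Phi_{\mathrm{end}} = \sum_e |M^e_{\mathrm{end}}| \ge L$, the true number of invalid colours. Third, at every arrival there is a colour choice that does not increase $\hat\Phi$.

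The third fact is the step that makes \texttt{PolyOn} work, and I would prove it by expanding $\hat\Phi_i - \hat\Phi_{i-1}$: only hyperedges containing $v_i$ are affected, for each such $e$ the weight $q^e$ grows by a factor of at most $1/(1-1/\ell)$ and $M^e$ loses $c_i$ exactly when $c_i$ was missing, so collecting terms gives $\hat\Phi_i \le \hat\Phi_{i-1} + \tfrac{1/\ell}{1-1/\ell}\sum_{e\ni v_i} q^e_{i-1}|M^e_{i-1}| - \tfrac{1}{1-1/\ell}W(c_i)$. Since $\sum_{c\in[\ell]} W(c) = \sum_{e\ni v_i} q^e_{i-1}|M^e_{i-1}|$, the maximising choice satisfies $W(c_i) \ge \tfrac1\ell\sum_{e\ni v_i} q^e_{i-1}|M^e_{i-1}|$ and the last two terms cancel, giving $\hat\Phi_i \le \hat\Phi_{i-1}$. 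Chaining the three facts, \texttt{PolyOn} leaves at most $\ell/\ln n$ invalid colours, hence returns at least $\ell(1 - 1/\ln n) = \tfrac{F_{min}}{\ln n}(1-o(1))$ disjoint set covers; with \eqref{upperbound} this bounds $\mu(\texttt{PolyOn}) \le F_{min}/\big(\tfrac{F_{min}}{\ln n}(1-o(1))\big) = (1+o(1))\ln n$, i.e. the claimed competitive ratio $\ln n$, matching the offline approximation ratio to within lower-order terms.

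The main obstacle I anticipate is precisely the online constraint: \texttt{PolyOn} cannot use the actual residual degree $|V^u(e)|$ that the offline recolouring of Section \ref{sec_off} relies on in \eqref{probrecolour}, so the delicate point is checking that the substitute $q^e_i$ is simultaneously (a) a legitimate upper bound on the relevant conditional probabilities at every time step, (b) equal to the Phase-I estimate \eqref{invalid} at $i=0$, and (c) compatible with the averaging argument so that a non-increasing colour choice always exists. Property (a) is exactly where the assumption that $F_{min}$ is known in advance (Remark \ref{mininf}) is used: without it the residual degree of a hyperedge cannot be lower bounded and no such estimator exists, which is consistent with Theorem \ref{minreqd}.
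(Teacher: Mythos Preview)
Your proof is correct and follows the same derandomization-by-conditional-expectation idea as the paper, with the same key insight: the unknown residual degree $|V^u(e)|$ is replaced by the computable lower bound $\max(0,F_{min}-k^e_i)$, which is exactly where the a-priori knowledge of $F_{min}$ is used.

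The presentation differs slightly from the paper's. The paper factors the algorithm into two explicit sub-routines: an \texttt{OnlineShrink} step that, as subsets arrive, deletes every occurrence of an element beyond its $F_{min}$th occurrence, producing a sequence $\mathbf{S}^{min}$ whose hypergraph $\mathcal{H}^{min}_{U,\mathbf{S}}$ has all hyperedges of size exactly $F_{min}$; and then an \texttt{OnlineColour} step that runs the Phase~II recolouring of \texttt{PolyOff} verbatim on $\mathcal{H}^{min}_{U,\mathbf{S}}$, where $|V^u(e)|=F_{min}-|V^d(e)|$ is now exactly computable. Corollary~\ref{cor1} then transfers the polychromatic colouring of the shrunk hypergraph back to $\mathcal{H}_{U,\mathbf{S}}$. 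Your version folds these two steps together by building the clamping $\max(0,\cdot)$ directly into the pessimistic estimator $\hat\Phi_i$; in effect, the paper's shrinking is what happens in your estimator once $k^e_i\ge F_{min}$ and $q^e_i$ saturates at $1$. Your route is arguably cleaner because it avoids introducing the auxiliary hypergraph and the separate corollary, and it lets late-arriving vertices still contribute to edges whose budget has been exhausted (your Case~3 contribution $-q^e_{i-1}$ is strictly better than the bound), though this extra slack is not needed for the $\ln n$ ratio. The paper's two-stage decomposition, on the other hand, makes the reduction to the offline analysis of Section~\ref{sec_off} completely transparent: after shrinking, \eqref{probrecolour} applies literally rather than as an upper bound.
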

We will prove Theorem \ref{existence} by constructing a deterministic online algorithm called the \texttt{PolyOn} algorithm. Note that the hypergraph $\mathcal{H}_{U, \mathcal{S}}$ is not available in advance to any online algorithm, and must be constructed as and when subsets arrive. To aid in this construction, we introduce a few concepts.
\vspace{-1mm}
\begin{definition}[Shrinking]
Shrinking a hyperedge $e$ of a hypergraph $G(V,E)$ to $k$ vertices (or size $k$) corresponds to removing elements of $V(e)$ such that $|V(e)|$ becomes equal to $k$. The removed elements of $V(e)$ can be arbitrary. Note that the vertex set $V$ is not disturbed; hyperedge $e$ is simply made to connect fewer vertices. \label{shrink}
\end{definition}
Let $i(e)$ represent the element $i\in U$ that was represented by hyperedge $e$ in $\mathcal{H}_{U,\mathcal{S}}$. Note that shrinking a hyperedge $e$ in $\mathcal{H}_{U,\mathcal{S}}$ to $k$ vertices corresponds to removing some $F_{i(e)}-k$ occurrences of the element $i(e)$ from the subsets that contain it. For example, in Figure \ref{Dual}, the hyperedge $1$ can be shrunk to size $1$ either by modifying the subset $\{1,2,4\}$ to $\{2,4\}$, or by modifying the subset $\{1,3\}$ to $\{3\}$.

We will now present the following Lemma.
\vspace{-1mm}
\begin{lemma}
Let the hyperedges in hypergraph $\mathcal{H}$ be shrunk to arbitrary sizes to produce another hypergraph $\mathcal{H}'$. Then any feasible polychromatic colouring of $\mathcal{H}'$ will be a feasible polychromatic colouring of $\mathcal{H}$. \label{hypshrink}
\end{lemma}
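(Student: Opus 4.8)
The plan is to show that a polychromatic colouring of the shrunk hypergraph $\mathcal{H}'$ only becomes "more valid" when we restore the removed vertices to each hyperedge. First I would fix a feasible polychromatic colouring $\chi$ of $\mathcal{H}'$ using some colour set $[k]$; by definition, for every hyperedge $e$ of $\mathcal{H}'$ and every colour $c\in[k]$, there is a vertex $v\in V'(e)$ with $\chi(v)=c$, where $V'(e)$ denotes the (shrunk) vertex set of $e$ in $\mathcal{H}'$. Now observe that $\mathcal{H}$ and $\mathcal{H}'$ have the same vertex set $V$ (shrinking by Definition~\ref{shrink} does not disturb $V$), so $\chi$ is simultaneously a colouring of the vertices of $\mathcal{H}$. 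The key structural fact is that shrinking only removes vertices from hyperedges, i.e. for every hyperedge $e$ we have $V'(e)\subseteq V(e)$, where $V(e)$ is the vertex set of the corresponding hyperedge in $\mathcal{H}$.

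The main step is then immediate: fix any hyperedge $e$ of $\mathcal{H}$ and any colour $c\in[k]$. Since $\chi$ is polychromatic on $\mathcal{H}'$, there is a vertex $v\in V'(e)$ with $\chi(v)=c$. But $V'(e)\subseteq V(e)$, so $v\in V(e)$ as well, and hence the hyperedge $e$ of $\mathcal{H}$ also contains a vertex of colour $c$. Since $e$ and $c$ were arbitrary, $\chi$ witnesses that every hyperedge of $\mathcal{H}$ contains all $k$ colours, i.e. $\chi$ is a feasible polychromatic colouring of $\mathcal{H}$ with the same number of colours. (One should also note the slight subtlety that shrinking a hyperedge $e$ of $\mathcal{H}_{U,\mathcal{S}}$ corresponds to deleting occurrences of the element $i(e)$ from some subsets, so the statement is really about the underlying set system: a colour class that covers $U$ in the shrunk instance still covers $U$ in the original, because the original subsets are supersets of the shrunk ones.)

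Honestly, there is no real obstacle here — the lemma is essentially a monotonicity observation, and the only thing to be careful about is bookkeeping: making explicit that shrinking preserves $V$ and only decreases each $V(e)$, and that "feasible polychromatic colouring" is a purely local (per-hyperedge) condition, so a property that holds for every hyperedge of $\mathcal{H}'$ and transfers along the inclusions $V'(e)\subseteq V(e)$ automatically holds for every hyperedge of $\mathcal{H}$. The converse is of course false in general (a colouring of $\mathcal{H}$ need not remain polychromatic after shrinking, since the witnessing vertex may be removed), but that direction is not claimed. I would keep the write-up to a few lines, emphasising that this is exactly what makes the online construction legitimate: an online algorithm may shrink hyperedges as subsets arrive, and Lemma~\ref{hypshrink} guarantees that any colouring it ultimately produces on the shrunk hypergraph is valid for the true hypergraph $\mathcal{H}_{U,\mathbf{S}}$.
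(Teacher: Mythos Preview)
Your proof is correct and follows essentially the same argument as the paper: both use that shrinking preserves the vertex set while yielding $V'(e)\subseteq V(e)$ for each hyperedge, so any colour present in $V'(e)$ is automatically present in $V(e)$. The paper's version is terser, but the logical content is identical.
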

\begin{proof}
Let hyperedge edge $e\in\mathcal{H}$ be shrunk to $e'\in \mathcal{H}'$. Consider a feasible polychromatic colouring of $\mathcal{H}'$ with some $\mathbf{c}$ colours. By definition, all hyperedges $e'\in H'$ contain all $\mathbf{c}$ colours. By Definition \ref{shrink}, for every $e'$ in $\mathcal{H}'$, $V(e')\subseteq V(e)$ where $e$ is the corresponding hyperedge in $\mathcal{H}$. Therefore, each hyperedge $e\in \mathcal{H}$ contains all $\mathbf{c}$ colours. This is a feasible polychromatic colouring of $\mathcal{H}$.  
\end{proof}

Let a hypergraph $\mathcal{H}^{min}_{U,\mathcal{S}}$ represent the hypergraph constructed from $\mathcal{H}_{U,\mathcal{S}}$ by shrinking each of its hyperedges to size $F_{min}$. As a corollary to Lemma \ref{hypshrink}, we therefore have:
\begin{corollary}\label{cor1}
Any polychromatic colouring of $\mathcal{H}^{min}_{U,\mathcal{S}}$ with $k$ colours is a polychromatic colouring of $\mathcal{H}_{U,\mathcal{S}}$ with $k$ colours.
\end{corollary}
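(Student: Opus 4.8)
The plan is to treat Corollary \ref{cor1} as a direct specialization of Lemma \ref{hypshrink}. First I would recall that, by its definition, $\mathcal{H}^{min}_{U,\mathcal{S}}$ is precisely the hypergraph obtained from $\mathcal{H}_{U,\mathcal{S}}$ by shrinking every hyperedge to size $F_{min}$. Since $\min_i |V(e_i)| = F_{min}$ in $\mathcal{H}_{U,\mathcal{S}}$, every hyperedge already has at least $F_{min}$ vertices, so this shrinking is well-defined in the sense of Definition \ref{shrink} (no hyperedge needs to be enlarged, and some vertex subset of the required cardinality always exists). Hence $\mathcal{H}^{min}_{U,\mathcal{S}}$ is a legitimate instance of the hypergraph $\mathcal{H}'$ appearing in Lemma \ref{hypshrink}, with $\mathcal{H} = \mathcal{H}_{U,\mathcal{S}}$.

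Next I would simply apply Lemma \ref{hypshrink}: any feasible polychromatic colouring of $\mathcal{H}' = \mathcal{H}^{min}_{U,\mathcal{S}}$ is a feasible polychromatic colouring of $\mathcal{H} = \mathcal{H}_{U,\mathcal{S}}$. It remains only to note that the count of colours is preserved. This is because shrinking does not disturb the vertex set $V$ (again by Definition \ref{shrink}); a polychromatic colouring is an assignment of colours to vertices, so the very same assignment that uses $k$ colours on $\mathcal{H}^{min}_{U,\mathcal{S}}$ uses exactly $k$ colours on $\mathcal{H}_{U,\mathcal{S}}$, and by the lemma it is still polychromatic there. This establishes the corollary.

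There is essentially no obstacle: the only point requiring a moment's thought is the well-definedness of shrinking to size $F_{min}$, which is immediate from the definition of $F_{min}$ as the minimum hyperedge cardinality. The corollary is thus a clean restatement of Lemma \ref{hypshrink} specialized to the particular shrunk hypergraph that the \texttt{PolyOn} construction will maintain online.
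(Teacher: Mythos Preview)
Your proposal is correct and matches the paper's approach exactly: the paper states Corollary~\ref{cor1} immediately after defining $\mathcal{H}^{min}_{U,\mathcal{S}}$ and explicitly introduces it with ``As a corollary to Lemma~\ref{hypshrink}, we therefore have,'' giving no further argument. Your write-up is simply a fuller spelling-out of that one-line deduction, including the (harmless) observation that shrinking to size $F_{min}$ is well-defined and that the vertex set, hence the colour count, is unchanged.
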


The \texttt{PolyOn} algorithm will aim to construct hypergraph $\mathcal{H}^{min}_{U,\mathcal{S}}$ and polychromatically colour it with $F_{min}/\ln n$ colours, online. Note that two things must be accomplished together: \textbf{(a)} Online shrinking of $\mathcal{H}_{U,\mathcal{S}}$ to produce ${H}^{min}_{U,\mathcal{S}}$ from $\mathbf{S}$, and \textbf{(b)} Online polychromatic colouring of ${H}^{min}_{U,\mathcal{S}}$.

The \texttt{PolyOn} algorithm will accomplish \textbf{(a)} by the following \texttt{OnlineShrink} method. Note that hypergraph $\mathcal{H}^{min}_{U,\mathbf{S}}$ can be constructed from a sequence of subsets $\mathbf{S}$ by constructing another sequence $\mathbf{S}^{min}$ as and when subsets in $\mathbf{S}$ arrive, such that $F_i(\mathbf{S}^{min})= F_{min}(\mathbf{S})$ $\forall$ $i\in U$. One can simply do this online by ignoring all occurrences of all elements $i$ after their $F_{min}(\mathbf{S})$th occurrence. To present this formally, let us denote by $\mathbf{S}_j$ the sub-sequence of $\mathbf{S}$ containing its first $j$ subsets and let $S_j$ denote the $j$th subset of $\mathbf{S}$. Similarly define $\mathbf{S}^{min}_j$ and subset $S^{min}_j$. For every subset arrival $S_j$, as long as $F_i(\mathbf{S}_j)\leq F_{min}(\mathbf{S})$ $\forall$ $i\in U$, we set $S^{min}_j=S_j$. If after some subset $S_j$ arrives, if $F_i(\mathbf{S}_j)> F_{min}(\mathbf{S})$ for some $i\in U$, we set $S^{min}_j= S_j\backslash \{i\}$. We can thereby ensure that after all subset arrivals, $F_i(\mathbf{S}^{min})= F_{min}(\mathbf{S})$ $\forall$ $i\in U$. We construct $\mathcal{H}^{min}_{U,\mathbf{S}}$ as $\mathcal{H}_{U,\mathbf{S}^{min}}$, in an online fashion. 

Since the shrinking of hypergraph $\mathcal{H}_{U,\mathcal{S}}$ to produce ${H}^{min}_{U,\mathcal{S}}$ can be accomplished online, from hereon, we will assume that the \texttt{OnlineShrink} process is carried out by the \texttt{PolyOn} algorithm for any arrival sequence of subsets. Now, we can assume that the subset sequence $\mathbf{S}^{min}$ arrives, and aim to achieve \textbf{(b)} simultaneously with \textbf{(a)}, i.e. to polychromatically colour $\mathcal{H}_{U,\mathbf{S}^{min}}$ (or $\mathcal{H}^{min}_{U,\mathbf{S}}$) online, as follows, by the \texttt{OnlineColour} method:

The \texttt{PolyOn} algorithm will first create the set of $F_{min}/ \ln (n\ln n)$ colours $[\ell]$. It will assume that all subsets (vertices) in $\mathbf{S}^{min}$ have already been coloured uniformly randomly with a colour from $[\ell]$ \emph{prior to arrival}. It will now recolour each incoming vertex with the prior knowledge that all hyperedges in $\mathcal{H}_{U,\mathbf{S}^{min}}$ have exactly $F_{min}$ vertices, i.e. $|V(e)|=F_{min}$ $\forall$ $e$. Recall Phase II of the \texttt{PolyOff} algorithm, in which we calculated the probability that hyperedge $e$ does not contain colour $c$ after vertices $v_1$ through $v_j$ had been recoloured, as $\mathbf{P}[\,Y_{e,c}=1|c_{1}, c_{2}, \ldots, c_{j}]$. The expression for $\mathbf{P}[\,Y_{e,c}=1|c_{1}, c_{2}, \ldots, c_{j}]$ was given by \eqref{probrecolour}. The \texttt{PolyOn} algorithm will attempt to recolour vertex $v_j$ similarly. but it is not provided $|V^u(e)|$ directly. It can, however, calculate $|V^u(e)|=|V(e)|-|V^d(e)|= F_{min}-|V^d(e)|$, where $|V^d(e)|$ is the number of vertices in hyperedge $e$ that have been recoloured. This is all information that is available to the online algorithm, and it can therefore calculate
\begin{equation*}
\mathbf{P}[\,Y_{e,c}=1|c_{1}, c_{2}, \ldots, c_{j}]=
	\begin{cases}
	0 , \text{ if } \exists \text{ } q\leq i \text{ s.t. } v_{q}\in V(e) \\
	\qquad \qquad \qquad \text{ and } c_{q}=c\\
	\big(1-1/\ell\big)^{F_{min}-|V^d(e)|}  \text{ else.}
	\end{cases} \label{probon}
\end{equation*}
for each hyperedge-colour pair $(e,c)$ in an online fashion! The \texttt{PolyOn} algorithm can then use \eqref{probon} to calculate the conditional expectation of the number of invalid colours $\mathbf{E}[\,L|c_{1},c_{2} ,\ldots, c_{i}]\leq \sum_{e\in E} \sum_{c\in [\ell]}\mathbf{P}\big[\,Y_{e,c}=1|c_{1}, c_{2}, \ldots, c_{i}\big]$. It will then recolour vertex $v_{i+1}$ such that $\mathbf{E}[L|c_1, c_2,\ldots, c_{i-1}, c_{i+1}]\leq \mathbf{E}[L|c_1, c_2,\ldots, c_{i}]$, $\forall$ $i$. Note that after all vertex arrivals, this will result in a polychromatic colouring of the hypergraph $\mathcal{H}_{U,\mathbf{S}^{min}}$ with $F_{min}/\ln n$ colours, by an equation similar to \eqref{actualapprox}. By Corollary \ref{cor1}, this is also a polychromatic colouring of the hypergraph $\mathcal{H}_{U,\mathbf{S}}$ with $F_{min}/\ln n$ colours, and so the \texttt{PolyOn} algorithm has produced $F_{min}/\ln n$ disjoint set covers of $U$ from $\mathbf{S}$ in an online manner, by executing both the \texttt{OnlineShrink} and \texttt{OnlineColour} operations together.

We know that the \texttt{PolyOn} algorithm (which we represent now by $\mathcal{P}$) returns at least $F_{min}/ \ln n$ set covers for all input sequences $\mathbf{S}$. Using \eqref{upperbound}, we can see that
\begin{equation*}
\mu(\mathcal{P}) = \max_{\mathbf{S}}\frac{T(\mathcal{M}^*_{off}, \mathbf{S})}{T(\mathcal{M}_{\mathcal{P}}, \mathbf{S})} = \frac{T(\mathcal{M}^*_{off}, \mathbf{S})}{\min_{\mathcal{S}}T(\mathcal{M}_{\mathcal{P}}, \mathbf{S})} \leq \ln n.
\end{equation*}
This proves Theorem \ref{existence}.

\subsection{Advantages of the \texttt{PolyOn} Online Algorithm}
Firstly, notice that for all input sequences in which $F_{min}\geq \ln n$, the \texttt{PolyOn} algorithm performs better than the trivial \texttt{GreedyCover} algorithm $\mathcal{V}$. 

The second advantage is that the \texttt{PolyOn} algorithm is polynomial time optimal.
It was shown in \cite{feige2002approximating} that no polynomial time algorithm (offline or online) can produce more than $T(\mathcal{M}^*_{off}, \mathbf{S})/\ln n$ disjoint set covers $\forall$ $\mathbf{S}$ unless $NP\subseteq DTIME(n^{\BigO{\log\log n}})$. Therefore, no online algorithm that runs in polynomial time can have a better competitive ratio than the \texttt{PolyOn} algorithm. The \texttt{PolyOn} algorithm also matches the performance of the best polynomial time offline algorithm. 

The following Lemma confirms the third advantage.
\begin{lemma}
There exists an infinite family of subset sequences $\mathbf{S}$ for which the competitive ratio of the \texttt{PolyOn} algorithm is $1$, i.e. $\mu_{\mathbf{S}}(\mathcal{P})=1$.
\end{lemma}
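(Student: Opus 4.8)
The plan is to exhibit, for arbitrarily large $n$, a subset sequence $\mathbf{S}$ on $U=\{1,\dots,n\}$ with $T(\mathcal{M}^*_{off},\mathbf{S})=1$. As is implicit in the proof of Theorem~\ref{existence}, $\mathcal{P}$ always colours with at least one colour, and since $\cup_{S_i\in\mathbf{S}}S_i=U$ that colour appears in every hyperedge of $\mathcal{H}_{U,\mathbf{S}}$; hence $T(\mathcal{M}_{\mathcal{P}},\mathbf{S})\ge 1$ for every $\mathbf{S}$. Together with the trivial bound $T(\mathcal{M}_{\mathcal{P}},\mathbf{S})\le T(\mathcal{M}^*_{off},\mathbf{S})$, any such $\mathbf{S}$ gives $T(\mathcal{M}_{\mathcal{P}},\mathbf{S})=1=T(\mathcal{M}^*_{off},\mathbf{S})$, i.e.\ $\mu_{\mathbf{S}}(\mathcal{P})=1$. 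The quickest family is $\mathbf{S}_n=[\{1\},\{2\},\dots,\{n\}]$, $n=1,2,\dots$: each $\mathbf{S}_n$ is a minimal set cover, so $T(\mathcal{M}^*_{off},\mathbf{S}_n)=1$, and the claim follows immediately.

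For a family on which $\mathcal{P}$ is forced to carry out a genuinely non-trivial colouring, I would instead take $\mathcal{H}_{U,\mathbf{S}}$ to be \emph{non-$2$-colourable}. Note that $T(\mathcal{M}^*_{off},\mathbf{S})\ge 2$ is exactly the statement that the vertices of $\mathcal{H}_{U,\mathbf{S}}$ admit a $2$-colouring with no monochromatic hyperedge — the two colour classes then being two disjoint set covers — so $T(\mathcal{M}^*_{off},\mathbf{S})=1$ iff $\mathcal{H}_{U,\mathbf{S}}$ has no proper $2$-colouring. Starting from the Fano configuration (seven subsets, each of size three, over seven elements, each of frequency three), which is the smallest non-$2$-colourable uniform hypergraph, and padding with extra elements, each inserted into three fixed subsets so that $F_{min}$ stays $3$ and the union stays $U$, produces such a sequence for every $n\ge 7$; non-$2$-colourability is preserved since padding only adds hyperedges, hence constraints. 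On every member $\mu_{\mathbf{S}}(\mathcal{P})=1$ by the argument of the first paragraph.

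The one place where care is needed is if one additionally wants $F_{min}\to\infty$ (so $\mathcal{P}$ runs in the regime $F_{min}\ge\ln n$ where it beats \texttt{GreedyCover}). On the one hand, $\mathcal{P}$, like \texttt{PolyOff}, always outputs at least $\tfrac{F_{min}}{\ln n}\big(1-o(1)\big)$ set covers, so $T(\mathcal{M}^*_{off},\mathbf{S})=1$ forces $F_{min}\le(1+o(1))\ln n$; on the other hand, by Erd\H{o}s's bound a non-$2$-colourable $k$-uniform hypergraph must have at least $2^{k-1}$ hyperedges, i.e.\ $n\ge 2^{F_{min}-1}$. These requirements are compatible exactly when $F_{min}=\Theta(\ln n)$, and one then has to invoke the existence of a non-$2$-colourable $k$-uniform hypergraph on $n$ hyperedges with $k=\lceil\ln n\rceil$ — obtained by padding a classical $\BigO{k^{2}2^{k}}$-edge non-$2$-colourable hypergraph up to $n$ edges — and check that $\ell=\max\{1,\lfloor F_{min}/\ln(n\ln n)\rfloor\}\ge 1$, which is automatic. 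Matching these parameters (together with the constant factors suppressed throughout) is the only delicate step; once it is done, the identity $T(\mathcal{M}_{\mathcal{P}},\mathbf{S})=T(\mathcal{M}^*_{off},\mathbf{S})=1$ is immediate and the proof is complete.
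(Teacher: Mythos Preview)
Your argument is correct for the literal statement, but it takes a genuinely different route from the paper. You build instances with $T(\mathcal{M}^*_{off},\mathbf{S})=1$ (singletons, then non-$2$-colourable hypergraphs such as the Fano plane with padding) and observe that \texttt{PolyOn} must then also return exactly one cover, giving $\mu_{\mathbf{S}}(\mathcal{P})=1$. The paper instead exhibits instances with $T(\mathcal{M}^*_{off},\mathbf{S})= F_{min}/\ln n$: it invokes the $\ln n$ integrality-gap instances of \texttt{MinSetCover} (Example~13.4 in Vazirani), in which every element has frequency exactly $F_{min}$ and every set cover uses at least $\frac{|\mathcal{S}|}{F_{min}}\ln n$ subsets, so at most $F_{min}/\ln n$ disjoint covers can exist; since \texttt{PolyOn} always outputs $F_{min}/\ln n$ covers, it is exactly optimal there.

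The difference is not just cosmetic. As you yourself note, $T(\mathcal{M}^*_{off},\mathbf{S})=1$ forces $F_{min}\le(1+o(1))\ln n$, hence $\ell=\lfloor F_{min}/\ln(n\ln n)\rfloor\le 1$ on every one of your instances; \texttt{PolyOn} is reduced to single-colour behaviour, indistinguishable from \texttt{GreedyCover}. The paper's family, by contrast, allows $F_{min}$ (and thus the number of covers actually produced) to be arbitrarily large, so it shows \texttt{PolyOn} matching OPT precisely in the regime where it is genuinely partitioning into many covers---which is the point of listing the lemma as an ``advantage'' of the algorithm. Your construction is more elementary and gives $\mu_{\mathbf{S}}(\mathcal{P})=1$ exactly rather than up to $1+o(1)$; the paper's buys the stronger qualitative conclusion. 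One small caveat: your claim $T(\mathcal{M}_{\mathcal{P}},\mathbf{S})\ge 1$ rests on reading $\ell=\max\{1,\lfloor F_{min}/\ln(n\ln n)\rfloor\}$, which is a reasonable convention but not stated explicitly in the description of \texttt{PolyOn}.
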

\begin{proof}
We will show an infinite family of subset sequences $\mathbf{S}$ for which $T(\mathcal{M}^*_{off}, \mathbf{S}) \leq F_{min}/\ln n$. Note that for these sequences, $\mu_{\mathbf{S}}(\mathcal{P}) = 1$. To show this, consider the \emph{minimum set cover} problem (\texttt{MinSetCover}), in which given $U$ and $\mathcal{S}$, we are required to find the set cover $C\subseteq \mathcal{S}$ such that $|C|$ is minimized. The integer programming (IP) formulation of \texttt{MinSetCover} is known to have an integrality gap $\ln n$ \cite{feige1998threshold}, or in other words, for any problem instance $\mathcal{I}$, the ratio of the IP's optimal solution ($OPT(IP_\mathcal{I})$) to the LP relaxation's optimal solution ($OPT(LP_\mathcal{I})$), in which each subset is given a coefficient between $0$ and $1$, is at most $\ln n$. Consider one such $\ln n$ integrality gap instance $\mathcal{I}_0$ in which the frequency of all elements is equal to $F_{min}$, and $\frac{OPT(IP_{\mathcal{I}_0})}{OPT(LP_{\mathcal{I}_0})}=\ln n$. For an example of such an instance $\mathcal{I}_0$, we refer the reader to Example 13.4 of \cite{vazirani2001approximation}. Note that $OPT(LP_{\mathcal{I}_0})=|\mathcal{S}|/ F_{min}$, which is accomplished by setting the coefficient of each subset in $\mathcal{S}$ to $1/F_{min}$. The optimal IP solution $OPT(IP_{\mathcal{I}_0})$  will therefore be at least $\frac{|\mathcal{S}|}{F_{min}}\ln n$, since this is a $\ln n$ integrality gap instance. Each set cover in this instance therefore consists of at least $N = \frac{|\mathcal{S}|}{F_{min}}\ln n$ subsets, and so the number of disjoint set covers can be at most $|\mathcal{S}|/N = F_{min}/\ln n$. We have therefore shown an infinite family (for different $n$) for which $T(\mathcal{M}^*_{off}, \mathbf{S}) \leq F_{min}/\ln n$ and $\mu_{\mathbf{S}}(\mathcal{P}) = 1$.
\end{proof}

\section{The Lower Bound}
In this section, for ease of exposition, we will first show a lower bound of $\Omega((\ln n)^{1/3})$ on the competitive ratio of any online algorithm, which we will then improve to $\Omega((\ln n)^{1/2})$.
We first present a brief overview of our method.

\begin{overview}
We will generate a subset sequence for an online algorithm consisting of 3 sub-sequences - let us call them $X$, $Y$ and $Z$ - in that order. We will first provide the sequence $X$, on which an online algorithm $\mathcal{A}$ will make an allocation $\mathcal{M}_\mathcal{A}(X)$. Depending on $\mathcal{M}_\mathcal{A}(X)$, we will then provide an adversarial sequence $Y(\mathcal{M}_\mathcal{A}(X))$ that upper bounds the number of disjoint set covers that can be formed by $\mathcal{A}$. Lastly, we will provide sequence $Z(\mathcal{M}_\mathcal{A}(X))$ such that the optimal offline algorithm can construct a larger number of disjoint set covers by making reallocations, which is not allowed for $\mathcal{A}$. We thus obtain a lower bound on the competitive ratio of all online algorithms. \label{ex1}
\end{overview}

In this section, we define the universe as $U=\{0,1\}^q$. There are therefore a total of $n=2^q$ elements, in which the $i$th element $u_i$ is represented by the $q$-bit binary representation of $i$, and $0\leq i \leq 2^q-1$. We also define $u^k_i$ as the $k$th bit of $u_i$, for $0\leq k \leq q-1$. We now form a sequence of subsets $\mathbf{S}_{com} = [S_j: S_j=\{u_i:u^j_i = 1\}, 1\leq j\leq q]$, i.e. subset $j$ of $\mathbf{S}_{com}$ contains all elements $u_i$ that have $1$ in their $j$th position. Note that $|\mathbf{S}_{com}| = q$ and that $|S_j|= 2^{q-1}$ $\forall$ $S_j\in \mathbf{S}_{com}$. We also see by definition that the frequency of element $u_i$ in $\mathbf{S}_{com}$, $F_i(\mathbf{S}_{com}) = \#\{k: u^k_i=1\} = q - \#\{k: u^k_i=0\}$. Also, $T(\mathcal{M}^*_{off},\mathbf{S}_{com})=0$, since $F_{min}(\mathbf{S}_{com})=0$. This is because the all-zero element does not appear in any of the subsets in $\mathbf{S}_{com}$. Therefore, $\mathbf{S}_{com}$ does not contain any set cover.

We will use $\mathbf{S}_{com}$ as the start of the subset sequence, i.e. sequence $X$ in Overview \ref{ex1}, in the proofs of Theorems \ref{root3} and \ref{root2}. To that end, note the following. Let any online algorithm $\mathcal{A}$ make an allocation $\mathcal{M}_{\mathcal{A}}(\mathbf{S}_{com})$ on $\mathbf{S}_{com}$, and partition its subsets into $P_1, P_2,\ldots$. After all subsets in $\mathbf{S}_{com}$ have arrived and been allocated, denote the elements $u_i\in U$ that are missing from partition $P_j$ by the set $E_j(\mathcal{M})$. We can assume the following without loss of generality, for all allocations $\mathcal{M}$
\begin{align}
E_1(\mathcal{M})=\{u_i:u^k_i = 0, k = \{0,1,\ldots r(1)\} \nonumber \\
\qquad \qquad \qquad \qquad \qquad \qquad		\text{ for some } r(1)\geq 0\}. \label{part1} 
\end{align}
In words, partition $P_1$ does not contain all those elements $u_i$ that have zeroes in their first $r(1)$ places, and hence contains the first $r(1)+1$ subsets in $\mathbf{S}_{com}$. This can always be accomplished by reordering the $q$ bits. Also, by the definition of subsets in $\mathbf{S}_{com}$, the number of subsets in $P_1$ is $r(1)+1$. Similar to \eqref{part1}, we see that 
\begin{align}
E_2(\mathcal{M})=\{u_i:u^k_i = 0, k = \{r(1)+1,\ldots r(2)\} \nonumber\\
\qquad \qquad \qquad \qquad \qquad \qquad                    \text{ for some } r(2)> r(1)\}, \label{part2}\\
E_j(\mathcal{M})=\{u_i:u^k_i = 0, k = \{r(j)+1,\ldots r(j+1)\} \nonumber\\
\qquad \qquad \qquad \qquad \text{ for some } r(j+1)> r(j)\} \text{ } \forall \text{ }\mathcal{M}. \label{part3}
\end{align}
Let us denote the number of subsets in partition $P_j$ by $d_j= r(j)- r(j-1)$ $\forall$ $j\geq 1$ (we define $r(0)=-1$ for consistency). Without loss of generality, we can assume that $d_{j+1}\geq d_j$ for all allocations $\mathcal{M}$, by reordering partitions if necessary.

After an allocation $\mathcal{M}_{\mathcal{A}}(\mathbf{S}_{com})$, let the maximum number of subsets placed in any partition be $L$, where $0\leq L\leq q$. Define the sets $D_{\ell}(\mathcal{M})=\{j:d_j=\ell\}$ for each $1\leq\ell\leq L$. In words, $D_{\ell}$ represents the indices of those partitions that contain exactly $\ell$ subsets from $\mathbf{S}_{com}$ after an online allocation $\mathcal{M}$. To simplify notation, we will use $D_\ell$ to represent $D_\ell(\mathcal{M})$ when there is no ambiguity about the allocation $\mathcal{M}$.

After the allocation $\mathcal{M}_{\mathcal{A}}(\mathbf{S}_{com})$ is done, we also define a \emph{bottleneck element} for each partition $P_j$ as $u_{b(j)}$ such that $u_{b(j)}\in E_j(\mathcal{M})$ and $\#\{k:u^k_{b(j)}=0\}$ is minimum, i.e. $u_{b(j)}$ is the element with the fewest zeroes that does not appear in partition $j$. For example, take an allocation in which $d_1=1$ and $d_2=2$, i.e. partition $P_1$ contains one subset $S_1=\{u_i:u^1_i = 1\}$ and partition $P_2$ contains $2$ subsets $S_2=\{u_i:u^2_i = 1\}$ and $S_3=\{u_i:u^3_i = 1\}$. In that case, the bottleneck elements for partitions $P_1$ and $P_2$ are $u_{b(1)}=01111\ldots q-1 \text{ times}$ and $u_{b(2)}=10011\ldots q-3 \text{ times}$, respectively. All elements of the universe that are not bottleneck elements are called \emph{non-bottleneck elements}. Note that the frequency in $\mathbf{S}_{com}$ of a bottleneck element of partition $j$ is $F_{b(j)}(\mathbf{S}_{com})=q - d_j$. We will combine these bottleneck elements appropriately to form the next sequence (corresponding to $Y$ in Overview \ref{ex1}) that upper bounds the number of disjoint set covers obtainable by any online algorithm.

Note that all the quantities have been defined with respect to allocations of $\mathbf{S}_{com}$.
We are now ready to prove that:
\begin{theorem}
The competitive ratio of any online DSCP algorithm is lower bounded by $\Omega((\ln n)^{1/3})$, i.e. $\mu(\mathcal{A})= \Omega((\ln n)^{1/3})$ $\forall$ $\mathcal{A}$. \label{root3}
\end{theorem}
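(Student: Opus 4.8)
The plan is to realize the three-phase template of Overview~\ref{ex1} with $X=\mathbf{S}_{com}$, exploiting the asymmetry that once an online algorithm $\mathcal{A}$ has irrevocably placed the $q$ subsets of $\mathbf{S}_{com}$ into partitions $P_1,P_2,\ldots$, its partition structure is frozen, whereas the offline optimum may still regroup $\mathbf{S}_{com}$ arbitrarily. With the notation fixed just before the statement --- sizes $d_1\le d_2\le\cdots$, $\sum_j d_j=q$, position-blocks of size $d_j$, the level sets $D_\ell=\{j:d_j=\ell\}$, and the bottleneck element $u_{b(j)}$ of each $P_j$ with $F_{b(j)}(\mathbf{S}_{com})=q-d_j$ --- the first step is to choose a threshold parameter $t$ and case-split on how \emph{spread out} $\mathcal{A}$'s allocation of $\mathbf{S}_{com}$ is: whether most of the $q$ subsets lie in partitions of size $\ge t$ or of size $<t$ (equivalently, a dichotomy on the profile $(|D_\ell|)_\ell$). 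The value of $t$ is optimized at the end.

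In the \emph{concentrated} branch I would take $Y$ to be the collection of complementary subsets $\bar S_p=\{u_i:u_i^p=0\}$, one per bit position $p$ (and $Z$ empty). One checks directly that in $\mathbf{S}_{com}\wedge Y$ every universe element has frequency exactly $q$, so $F_{min}=q$, which is the value handed to $\mathcal{A}$ in advance. The offline optimum pairs $S_p$ with $\bar S_p$ to get $q$ disjoint set covers, so $T(\mathcal{M}^*_{off},\mathbf{S})=F_{min}=q$; on the other hand every $S_p$ sits inside one of $\mathcal{A}$'s committed partitions, and a partition built solely from $\bar S_p$'s never covers the all-ones element, so only the committed partitions can become covers, giving $T(\mathcal{M}_\mathcal{A},\mathbf{S})\le$ (number of committed partitions), which is small in this branch (at most $t$, say). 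Hence the competitive ratio here is of order $q/t$.

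In the \emph{spread-out} branch I would instead build $Y$ out of the bottleneck elements, bundling the $u_{b(j)}$'s into a small number of subsets each of which is a union of many bottlenecks, and append a $Z$ of complement-type subsets tailored to one good re-partition of $\mathbf{S}_{com}$ rather than to $\mathcal{A}$'s partitions. The offline bound comes from exhibiting that re-partition explicitly and pairing it with $Y\wedge Z$ to extract $\Omega(q)$ (or, more conservatively, $\Omega(q/t)$) disjoint covers. The online bound uses the key feature of bottleneck elements: $u_{b(j)}$ already has frequency $q-d_j$, so once $F_{min}$ is pinned it admits only a handful of further occurrences, and $\mathcal{A}$ can complete $P_j$ only by committing one of the scarce bottleneck-carrying subsets to $P_j$ --- and cannot assign the same physical subset to two partitions. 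A system-of-distinct-representatives / Hall-type count on this assignment, together with the fact that the spread-out allocation has many partitions but only few bottleneck packets, caps $T(\mathcal{M}_\mathcal{A},\mathbf{S})$ well below $T(\mathcal{M}^*_{off},\mathbf{S})$. Choosing $t$ to equate the ratios from the two branches (once the exact bookkeeping of $F_{min}$ and of the number of usable packets is done, which is where a polynomial factor is lost) yields $\mu(\mathcal{A})=\Omega(q^{1/3})=\Omega((\ln n)^{1/3})$ since $\ln n=\Theta(q)$.

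The main obstacle is the design and analysis of $Y$ (and $Z$) in the spread-out branch: the same sequence must damage $\mathcal{A}$'s frozen partitions while leaving a fresh re-partition by the optimum essentially unharmed, and it must also defeat an $\mathcal{A}$ that, anticipating the adversary, opens brand-new partitions for the incoming subsets --- this is precisely why bottleneck elements are the right gadget, being the unavoidable expensive elements of each committed partition with already near-maximal frequency. Making the two branches meet at a single threshold, and tracking $F_{min}$ through the construction so that the number announced to $\mathcal{A}$ is simultaneously consistent with both branches, is the delicate part, and the slack in that balancing is why the exponent obtained here is $1/3$ rather than the $1/2$ reached by the sharper construction of Theorem~\ref{root2}.
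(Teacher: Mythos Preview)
Your case-split strategy is structurally different from the paper's argument. The paper does not branch on the allocation profile at all: it supplies a single adversarial tail $\mathbf{S}_a$ (depending on $\mathcal{M}_\mathcal{A}(\mathbf{S}_{com})$, but of the same shape for every allocation) and then bounds $T(\mathcal{M}_\mathcal{A},\cdot)$ by solving an explicit optimization over the profile $(|D_\ell|)_\ell$.

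Your proposal has real gaps precisely at the points you flag as delicate. In the concentrated branch the stated dichotomy (``most of the $q$ subsets lie in partitions of size $\ge t$'') does not imply the conclusion you actually use (``number of committed partitions $\le t$''): an allocation with one block of size $q/2$ together with $q/2$ singletons is concentrated in your sense yet has $q/2+1$ committed partitions, so the $\bar S_p$-based bound fails. In the spread-out branch the construction of $Y$ is left open, and the $F_{min}$ bookkeeping is a genuine obstruction rather than a detail: each $u_{b(j)}$ already has frequency $q-d_j$ in $\mathbf{S}_{com}$, so to pin $F_{min}=q$ it must appear in exactly $d_j$ further subsets, which is incompatible with packaging all bottlenecks into ``a small number'' of packets when the $d_j$ vary. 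The Hall-type count you allude to is therefore not applied to any concrete family, and you have not exhibited the offline re-partition extracting $\Omega(q)$ covers from that same sequence.

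The paper's device sidesteps both issues at once. For each level $\ell$ it places into $\mathbf{S}_a$ exactly $\ell$ copies of $S^a_\ell=\{u_{b(j)}:j\in D_\ell\}$; this simultaneously makes $F_{b(j)}=q$ for every $j$ and caps the number of level-$\ell$ partitions that can be completed at $\min(\ell,|D_\ell|)$. Maximizing $\sum_\ell \min(\ell,|D_\ell|)$ subject to $\sum_\ell \ell\,|D_\ell|=q$ then gives $T(\mathcal{M}_\mathcal{A},\mathbf{S}_1)=O(q^{2/3})$ uniformly, the optimum occurring at $|D_\ell|=\ell$. The offline bound $T(\mathcal{M}^*_{off},\mathbf{S}_1)\ge q/2$ follows from the observation that any two subsets of $\mathbf{S}_{com}$ lying in \emph{distinct} parts of $\mathcal{M}_\mathcal{A}$ together contain every bottleneck element, so pairing them and topping up from $\mathbf{S}_{inf}$ yields $q/2$ disjoint covers. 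No threshold parameter is tuned; the exponent $1/3$ simply falls out of $\sum_{\ell}\ell^2=q$.
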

\begin{proof}
We will provide a subset sequence $\mathbf{S}_1$ for which $T(\mathcal{M}^*_{off},\mathbf{S}_1)=\Omega(F_{min}(\mathbf{S}_1))$, and $T(\mathcal{M}_{\mathcal{A}},\mathbf{S}_1)= \BigO{\frac{F_{min}(\mathbf{S}_1)}{(\ln n)^{1/3}}}$ $\forall$ online algorithms $\mathcal{A}$. The first $q$ elements of $\mathbf{S}_1$ will be identical to $\mathbf{S}_{com}$. We will then create the remaining elements of $\mathbf{S}_1$ adversarially depending on the allocation of $\mathbf{S}_{com}$ in order to limit the number of disjoint set covers that can be created by any online algorithm.

Consider an online algorithm $\mathcal{A}$ that makes an allocation $\mathcal{M}_{\mathcal{A}}(\mathbf{S}_{com})$ on the first $q$ subsets. We will assume \eqref{part1}, \eqref{part2} and \eqref{part3}, which hold without loss of generality. We now construct the next sequence of subsets $\mathbf{S}_a$ adversarially. Create one copy of the subset $S^a_{1}= \{x:x=u_{b(j)}, j\in D_{1}\}$, which contains bottleneck elements of all partitions that contain exactly $1$ subset from $\mathbf{S}_{com}$. Note that $F_{b(j)}(\mathbf{S}_{com}\wedge S^a_1) = q$ $\forall$ $j\in D_1$. We will ensure that these bottleneck elements $u_{b(j)}$, $j\in D_{1}$ never arrive later in sequence $\mathbf{S}_1$. Now look at all the partitions $P_j$ where $j\in D_1$. They all require their respective bottleneck elements in order to form set covers, and yet there is only one available subset $S^a_1$ that contains these bottleneck elements. Therefore, at most $1$ partition among $P_j$, $j\in D_1$ can form a set cover. 
Extending this further, we now create the adversarial sequence $\mathbf{S}_a$ containing $\ell$ copies of the subsets $S^a_{\ell}= \{x:x=u_{b_j}, j\in D_{\ell}\}$ for all $1\leq\ell\leq L$, in some arbitrary order. Note that through this, $F_{b(j)}(\mathbf{S}_{com}\wedge \mathbf{S}_a) = q$, $\forall$ $j$. Similar to the argument for partitions $P_j$ where $j\in D_1$, we can argue that a maximum of $\ell$ partitions among $P_j$ for $j\in D_\ell$ can form set covers. Also define a sequence $\mathbf{S}_{inf}$ which contains an arbitrarily large number of singleton subsets of all non-bottleneck elements in any arbitrary order. Let $\mathbf{S}_1=\mathbf{S}_{com}\wedge \mathbf{S}_a\wedge \mathbf{S}_{inf}$.

Note that $F_{min}(\mathbf{S}_1)=F_{b(j)}(\mathbf{S}_1)$ for any $j$, and $F_{b(j)}(\mathbf{S}_1)= q$ $\forall$ $j$, since the non-bottleneck elements appear an infinite number of times. Now that we have constructed $\mathbf{S}_1$ depending on $\mathcal{M}_{\mathcal{A}}(\mathbf{S}_{com})$, we present the following Lemma.
\begin{lemma}
The number of disjoint set covers that can be formed by any online algorithm from $\mathbf{S}_1$ is $\BigO{q^{2/3}}$, i.e. $T(\mathcal{M}_{\mathcal{A}}, \mathbf{S}_1)= \BigO{q^{2/3}}$ $\forall$ online algorithms $\mathcal{A}$. \label{lemma1}
\end{lemma}
\begin{proof}
Let $A_\ell$ denote the number of disjoint set covers that can be formed by partitions that have $\ell$ subsets from $\mathbf{S}_{com}$. Thus, $A_\ell=\#\{P_j:j\in D_\ell, P_j \text{ forms a set cover.}\}$. By definition, $A_\ell\leq |D_\ell|$. Recall that we have ensured by providing $\mathbf{S}_a$ that a maximum of $\ell$ partitions among $P_j$ for $j\in D_\ell$ can form set covers for any allocation $\mathcal{M}_\mathcal{A}(\mathbf{S}_{com})$. In other words, $A_\ell\leq \min(\ell, |D_\ell|)$ partitions can form set covers for each $D_\ell$. The optimization problem over all possible online allocations $\mathcal{M}_\mathcal{A}(\mathbf{S}_{com})$ is therefore the following:
\vspace{-2mm}
\begin{eqnarray}
&&\text{Maximize : } \qquad \sum_{\ell=1}^L A_\ell \label{opt1}\\
&&\text{Subject to } \qquad \sum_{\ell=1}^{L}\ell\cdot|D_\ell| = q, \label{const1a}\\
&&\qquad \qquad A_\ell \leq \min(\ell, |D_\ell|) \text{ } \forall \text{ } 1\leq \ell\leq L \label{const2a}
\vspace{-3mm}
\end{eqnarray}
where \eqref{opt1} corresponds to maximizing the number of disjoint set covers, \eqref{const1a} is the constraint on the number of subsets available in $\mathbf{S}_{com}$, and \eqref{const2a} is the constraint imposed by the bottleneck elements in $\mathbf{S}_a$. The optimization is over $|D_\ell|$ (each $|D_\ell|=|D_\ell(\mathcal{M})|$ corresponds to some allocation $\mathcal{M}_\mathcal{A}(\mathbf{S}_{com})$.

The optimal solution to \eqref{opt1} occurs when $|D_\ell|=\ell$, $\forall$ $\ell$. That would imply that the optimal online allocation corresponds to creating $\ell$ partitions $P_j$ each containing $\ell$ subsets from $\mathbf{S}_{com}$, i.e. $1$ partition with $1$ subset, $2$ partitions with $2$ subsets and so on. Even after the sequence $\mathbf{S}_a$ is provided, each partition created by $\mathcal{M}_\mathcal{A}(\mathbf{S}_{com})$ can form a set cover. This is because $\mathbf{S}_a$ can provide each partition with its bottleneck element, and $\mathbf{S}_{inf}$ provides all non-bottleneck elements. Therefore, the constraint \eqref{const1a} evaluates to 
\vspace{-2mm}
\begin{equation}
\sum_{\ell=1}^{L}\ell\cdot|D_\ell| = \sum_{\ell=1}^{L}\ell^2=q, 
\vspace{-2mm}
\end{equation}
which implies that $\BigO{L^3}=q$, so $L= \BigO{q^{1/3}}$. The number of disjoint set covers is therefore
\vspace{-2mm}
\begin{equation}
\sum_{\ell=1}^L \: \min(\ell, |D_\ell|) = \sum_{\ell=1}^L \: \ell = \BigO{L^2} = \BigO{q^{2/3}}.
\vspace{-2mm}
\end{equation}
Therefore, $T(\mathcal{M}_{\mathcal{A}}, \mathbf{S}_1)= \BigO{q^{2/3}}$ $\forall$ online algorithms $\mathcal{A}$.
\end{proof}

\begin{lemma}
The number of disjoint set covers that can be formed from $\mathbf{S}_1$ by the optimal offline algorithm is lower bounded by $q/2$, i.e. $T(\mathcal{M}^*_{off}, \mathbf{S}_1)\geq q/2$. \label{lemma2}
\end{lemma}
\begin{proof}
Since $\mathbf{S}_1$ was created adversarially based on the allocation made by the online algorithm, we will find the optimal offline solution by reallocating subsets from the online solution. Consider an allocation $\mathcal{M}_\mathcal{A}(\mathbf{S}_1)$. The following proposition will clarify how the reallocation must be done.
\begin{proposition}
The union of any two subsets $S_x,S_y\in \mathbf{S}_{com}$ taken such that $S_x\in P_s$ and $S_y\in P_t$ and $s\neq t$, will contain all bottleneck elements $u_{b(j)}$ $\forall$ $j$. \label{prop:disjoint}
\end{proposition}
\begin{proof}
Note that by definition, $S_x\cup S_y$ contains all bottleneck elements $u_{b(j)}$ $\forall$ $j\neq \{s,t\}$. In order to show that it also contains $u_{b(s)}$ and $u_{b(t)}$, we will show that $u_{b(s)}\in S_y$ and $u_{b(t)}\in S_x$. For some $g\neq h$, by definition, $S_x=\{u_i:u^g_i=1\}$, and $S_y=\{u_i:u^h_i=1\}$. Also, $u_{b(s)}\in E_s(\mathcal{M}$), and therefore by \eqref{part3}, $u_{b(s)}$ is such that $u^k_{b(s)} = 0$ $\forall$ $k = \{r(s)+1,\ldots r(s+1)\}$ and $u^k_{b(s)}=1$ otherwise. Similarly, $u_{b(t)}$ is such that $u^k_{b(t)} = 0$ $\forall$ $k = \{r(t)+1,\ldots r(t+1)\}$ and $u^k_{b(t)}=1$ otherwise. Now note that $g\not\in \{r(t)+1,\ldots r(t+1)\}$ and $h\not\in \{r(s)+1,\ldots r(s+1)\}$, since partitions $P_s$ and $P_t$ are distinct. Therefore, $u_{b(s)}\in S_y$ and $u_{b(t)}\in S_x$.
\end{proof}
From Proposition \ref{prop:disjoint}, it is clear that if an allocation $\mathcal{M}_\mathcal{A}(\mathbf{S}_{com})$ is such that subsets can be chosen pairwise from different partitions, the offline algorithm can produce at least $|\mathbf{S}_{com}|/2 = q/2$ disjoint set covers. This is because all bottleneck elements corresponding to the allocation $\mathcal{M}_\mathcal{A}(\mathbf{S}_{com})$ are covered by the union of two subsets from different partitions, and $\mathbf{S}_{inf}$ provides an infinite supply of non-bottleneck elements. However, for allocations $\mathcal{M}_\mathcal{A}(\mathbf{S}_{com})$ such that $d_j>|\mathbf{S}_{com}|/2$ for some partition $j$, such a pairwise choice of subsets from different partitions is impossible. A proof for that case is provided in the Appendix. 
\end{proof}
From Lemmas \ref{lemma1} and \ref{lemma2}, $\mu_{\mathcal{S}_1}(\mathcal{A})= \Omega(q^{1/3})$ $\forall$ online algorithms $\mathcal{A}$. Since $q= \log_2 n$, Theorem \ref{root3} is proved.
\end{proof}

We presented Theorem \ref{root3} to show that the grouping of bottleneck elements to form the subset sequence $\mathbf{S}_a$ allows us to lower bound $T(\mathcal{M}_{\mathcal{A}}, \mathbf{S}_1)$. We will build on these ideas in the presentation of Theorem \ref{root2}.
\begin{theorem}
The competitive ratio of any online DSCP algorithm is lower bounded by $\Omega((\ln n)^{1/2})$, i.e. $\mu(\mathcal{A})= \Omega((\ln n)^{1/2})$ $\forall$ $\mathcal{A}$. \label{root2}
\end{theorem}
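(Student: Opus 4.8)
The plan is to build directly on the construction of Theorem \ref{root3}, replacing the adversarial sequence $\mathbf{S}_a$ by a sparser one that pins the online algorithm down much more tightly. As in Overview \ref{ex1}, I hand the online algorithm $\mathcal{A}$ a sequence $\mathbf{S}_2 = \mathbf{S}_{com}\wedge\mathbf{S}_b\wedge\mathbf{S}_{inf}$. The prefix $\mathbf{S}_{com}$ (playing the role of $X$) is unchanged; once it has been allocated I look at $\mathcal{M}_\mathcal{A}(\mathbf{S}_{com})$, assume \eqref{part1}--\eqref{part3} without loss of generality, and read off the size classes $D_\ell$. Recall from the development preceding Theorem \ref{root3} that a partition $P_j$ of size $\ell=d_j$ built solely from $\mathbf{S}_{com}$ is missing, among all bottleneck elements, exactly its own bottleneck element $u_{b(j)}$, and that $u_{b(j)}$ can appear after $\mathbf{S}_{com}$ only inside whatever helper subsets I choose to create.

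The refinement is entirely in $\mathbf{S}_b$ (the role of $Y$): instead of $\ell$ copies of $S^a_\ell$, I create only a \emph{single} copy of $S^a_\ell=\{u_{b(j)}:j\in D_\ell\}$ for every class with $1\le\ell\le q/2$ (together with enough copies for the -- necessarily unique -- class with $\ell>q/2$, see below). Since a single copy can be dropped into just one partition, at most one partition of each small class $D_\ell$ ever receives its bottleneck element, so each small class yields at most one completed set cover; a freshly created \emph{collector} partition, on the other hand, needs one copy of $S^a_\ell$ for every nonempty class simultaneously and there is only one of each, so collectors add at most a constant. Hence $T(\mathcal{M}_\mathcal{A},\mathbf{S}_2)=\BigO{\#\{\ell:D_\ell\ne\emptyset\}}$. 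Now the budget identity $\sum_\ell \ell\,|D_\ell| = |\mathbf{S}_{com}| = q$ forces the distinct sizes actually used to satisfy $1+2+\cdots+L\le q$ when they are $1,\dots,L$, so $L=\BigO{\sqrt q}$ and therefore $T(\mathcal{M}_\mathcal{A},\mathbf{S}_2)=\BigO{\sqrt q}$.

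For the optimal offline value I reuse the reallocation argument of Theorem \ref{root3} verbatim: by Proposition \ref{prop:disjoint} the union of any two $\mathbf{S}_{com}$-subsets lying in different partitions contains every bottleneck element, so the optimal offline algorithm pairs the $q$ subsets of $\mathbf{S}_{com}$ across partitions, appends to each pair the required singletons from $\mathbf{S}_{inf}$, and obtains $T(\mathcal{M}^*_{off},\mathbf{S}_2)\ge q/2$ whenever no partition holds more than $q/2$ subsets. The leftover case -- a single partition with more than $q/2$ subsets -- is handled as in the Appendix to Theorem \ref{root3}: for that one over-large class I supply enough copies of its $S^a_\ell$ for the offline algorithm to patch the unpaired subsets and still reach $\Omega(q)$ disjoint covers, which costs nothing on the online side because a partition of size $>q/2$ is alone in its size class and contributes at most one completion regardless of how many helper copies arrive. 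Combining the two bounds, $\mu_{\mathbf{S}_2}(\mathcal{A}) = T(\mathcal{M}^*_{off},\mathbf{S}_2)/T(\mathcal{M}_\mathcal{A},\mathbf{S}_2) = \Omega(q/\sqrt q)=\Omega(\sqrt q)$, and since $q=\log_2 n$ this is $\Omega(\sqrt{\ln n})$ for every online algorithm $\mathcal{A}$, which is exactly Theorem \ref{root2}.

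The step I expect to fight with is making the two requirements coexist. Squeezing the online algorithm wants as few helper subsets as possible, while the offline reallocation bound -- especially when the online algorithm buries almost all of $\mathbf{S}_{com}$ inside one giant partition -- seems to want many of them, and any helper the offline algorithm exploits the online algorithm can copy. The resolution to nail down carefully is that a giant partition already caps the online algorithm at $\BigO{\sqrt q}$ completions (its own size class gives one, the remaining small classes share a tiny subset budget), so the adversary can afford to be lavish with helper copies precisely in that regime; the delicate bookkeeping is controlling the freshly created (collector) partitions and the over-large-partition case at the same time, which is the technical heart of the argument.
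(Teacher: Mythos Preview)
Your framework matches the paper's, and the target optimization --- forcing the distinct nonempty size classes to be $1,2,\ldots,L$ so that $1+2+\cdots+L\le q$ yields $L=\BigO{\sqrt q}$ --- is exactly right. The offline half (reusing Proposition~\ref{prop:disjoint} and the pairing of Lemma~\ref{lemma2}) also carries over unchanged. But there is a genuine gap in your helper sequence $\mathbf{S}_b$.

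With a single copy of $S^a_\ell$ per class, the bottleneck element $u_{b(j)}$ for $j\in D_\ell$ has total frequency $(q-\ell)+1=q-\ell+1$ in $\mathbf{S}_2$. Hence $F_{min}(\mathbf{S}_2)$ equals $q-\ell_{\max}+1$, where $\ell_{\max}$ is the largest class actually used, and this depends on the allocation $\mathcal{M}_\mathcal{A}(\mathbf{S}_{com})$. But by Remark~\ref{mininf} the adversary must announce $F_{min}$ to the algorithm \emph{before} any subset arrives, so $F_{min}$ cannot depend on the allocation; your construction therefore does not fit the model. Padding does not rescue it: adding $\ell-1$ further copies of $S^a_\ell$ to equalize frequencies recreates exactly the $\mathbf{S}_a$ of Theorem~\ref{root3} and collapses the bound back to $q^{1/3}$, while adding singletons $\{u_{b(j)}\}$ lets every original partition $P_j$ pick up its own bottleneck element and destroys the online upper bound entirely.

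The paper's fix is a \emph{nested} helper sequence: it sets $S^b_\ell=\bigcup_{r\ge\ell}S^a_r$, so the bottleneck elements of class $\ell$ lie in exactly the $\ell$ helpers $S^b_1,\ldots,S^b_\ell$, giving every bottleneck element frequency $(q-\ell)+\ell=q$ regardless of the allocation; the adversary can then legitimately announce $F_{min}=q$ up front. The same nesting replaces your per-class cap $A_\ell\le 1$ by the cumulative constraint $A_\ell\le\min\bigl(|D_\ell|,\ \ell-\sum_{x<\ell}A_x\bigr)$, whose optimum is still $|D_\ell|=1$ for all $\ell$ and hence $\sum_\ell A_\ell\le L=\BigO{\sqrt q}$. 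So your qualitative picture (one effective helper per class) is correct, but the frequency-balancing role of the nesting is what makes the construction admissible under the model, and that is the piece your proposal is missing.
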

\begin{proof}
We will use a technique similar to the proof of Theorem \ref{root3}. Instead of creating the subset sequence $\mathbf{S}_1$, however, we will now create the subset sequence $\mathbf{S}_2$, using a different adversarial sequence $\mathbf{S}_b$. The first $q$ subsets in $\mathbf{S}_2$ are still the sequence $\mathbf{S}_{com}$. Depending on the allocation $\mathcal{M}_\mathcal{A}(\mathbf{S}_{com})$, we will use the subsets in sequence $\mathbf{S}_a$ as defined in the proof of Theorem \ref{root3} to create the adversarial subsets in $\mathbf{S}_b$. Let $S^b_1=S^a_1\cup S^a_2\cup \ldots S^a_L$, $S^b_2=S^a_2\cup S^a_3\cup \ldots S^a_L$, and so on, with $S^b_\ell = \cup^{L}_{r = \ell}S^a_\ell$. Let $\mathbf{S}_b = [S_j: S_j=S^b_j, 1\leq j \leq L]$. Again, let $\mathbf{S}_{inf}$ be the infinite sequence of non-bottleneck elements for the allocation $\mathcal{M}_\mathcal{A}(\mathbf{S}_{com})$. Let $\mathbf{S}_2=\mathbf{S}_{com}\wedge \mathbf{S}_b\wedge \mathbf{S}_{inf}$.

Note: $F_{min}(\mathbf{S}_2) = F_{b(j)}(\mathbf{S}_2)$ for any $j$, $F_{b(j)}(\mathbf{S}_2)= q$ $\forall$ $j$.
\begin{lemma}
The number of disjoint set covers that can be formed by any online algorithm from $\mathbf{S}_2$ is $\BigO{q^{1/2}}$, i.e. $T(\mathcal{M}_{\mathcal{A}}, \mathbf{S}_2)= \BigO{q^{1/2}}$ $\forall$ online algorithms $\mathcal{A}$. \label{lemma1a}
\end{lemma}
\begin{proof}
Let $A_\ell=\#\{P_j:j\in D_\ell, P_j \text{ forms a set cover.}\}$, defined as before. Note that the total number of disjoint set covers that can be formed is upper-bounded by $L$, since there are only $L$ subsets in $\mathbf{S}_b$ that contain bottleneck elements, and each partition requires its bottleneck element in order to form a set cover. Note that only $1$ partition $P_j$, $j\in D_1$ can be made into a set cover, i.e. $A_1\leq 1$, since only one subset $S^b_1\in\mathbf{S}_b$ contains the bottleneck elements $u_{b(j)}$ $\forall$ $j\in D_1$. After that set cover is formed, only $1$ partition $P_j$, $j\in D_2$ can be made into a set cover by using the subset $S^b_2\in\mathbf{S}_b$, which is the only remaining subset in $\mathbf{S}_b$ that contains the bottleneck elements $u_{b(j)}$ $\forall$ $j\in D_2$. Alternatively, an online algorithm could have allocated $S^b_1$ and $S^b_2$ to $2$ partitions $P_j$, $j\in D_2$ and made them set covers. Mathematically, $A_2\leq 2 - A_1$. This logic can be extended for all partitions $P_j$, $j\in D_\ell$ $\forall$ $3\leq \ell\leq L$, to give rise to the constraint $A_\ell\leq \min(|D_\ell|, \ell - \sum^{\ell-1}_{x=1}A_x)$. The optimization problem over all possible online allocations $\mathcal{M}_\mathcal{A}(\mathbf{S}_{com})$ is therefore the following:
\begin{eqnarray}
&&\text{Maximize : } \qquad \sum^L_{\ell=1}A_\ell \label{opt2}\\
&&\text{Subject to } \qquad \sum_{\ell=1}^{L}\ell\cdot|D_\ell| = q, \label{const1} \\
&&  A_\ell\leq \min(|D_\ell|, \ell - \sum^{\ell-1}_{x=1}A_x) \text{ }\forall \text{ } 1\leq\ell\leq L. \label{const2}
\end{eqnarray}
The objective function \eqref{opt2} represents the maximization of the number of disjoint set covers. Constraint \eqref{const1} is because of the number of subsets available in $\mathbf{S}_{com}$, and constraint \eqref{const2} arises out of the structure of subsets in $\mathbf{S}_b$, in the fashion explained above. The optimal solution to this problem occurs for $|D_\ell|=1$ $\forall$ $1\leq \ell\leq L$, for which $A_\ell =1$ $\forall$ $1\leq \ell\leq L$. It is possible to intuitively see the reason for this, since an allocation with constraint \eqref{const1} will try to maximize the number of partitions that contain only $1$ subset from $\mathbf{S}_{com}$, and then two subsets from $\mathbf{S}_{com}$, and so on, after which constraint \eqref{const2} will ensure that a maximum of one partition containing $\ell$ subsets from $\mathbf{S}_{com}$ can form a set cover. For this solution, it is clear from constraint \eqref{const1} that $L = \BigO{q^{1/2}}$, and is the number of disjoint set covers.
\end{proof}

Note that $T(\mathcal{M}^*_{off},\mathbf{S}_2)= q/2$. The proof is identical to that of Lemma \ref{lemma2}, since $\mathbf{S}_2$ also contains $\mathbf{S}_{com}$ and $\mathbf{S}_{inf}$. Like with Lemma \ref{lemma2}, there is a slight technicality, which is dealt with in the Appendix.
So from Lemma \ref{lemma1a}, $\mu_{\mathcal{S}_2}(\mathcal{A})= \Omega(q^{1/2})$ $\forall$ online algorithms $\mathcal{A}$, where $q=\log_2 n$.
\end{proof}

\section{Simulations}
For lack of space, we present only one simulation result for the \texttt{PolyOn} algorithm $\mathcal{P}$. We considered the online resource allocation problem, in which each server acquires the files uniformly randomly, each with probability $p$, from the universe of $n$ files to form the subset sequence $\mathbf{S}$. Note that $E[F_{min}(\mathbf{S})]= |\mathbf{S}|p=k$ (say). We then appended subsets to ensure that $F_{min}=k$. Simulations were carried out for three different values of $F_{min}$. The plot of the number of set covers returned by the \texttt{PolyOn} algorithm $T(\mathcal{M}_\mathcal{P},\mathbf{S})$ versus $n$ is provided in Figure \ref{sim}. We can see that we get approximately $F_{min}/ \ln n$ set covers for all the three scenarios, thus validating our theoretical analysis in Section \ref{onex}.
\begin{figure}[h!]
    \begin{center}
    \psfrag{0}{\footnotesize{$0$}}
    \psfrag{50}{\footnotesize{$50$}}
    \psfrag{100}{\footnotesize{$100$}}
    \psfrag{150}{\footnotesize{$150$}}
    \psfrag{200}{\footnotesize{$200$}}
    \psfrag{250}{\footnotesize{$250$}}
    \psfrag{300}{\footnotesize{$300$}}
    \psfrag{350}{\footnotesize{$350$}}
    \psfrag{400}{\footnotesize{$400$}}
    \psfrag{450}{\footnotesize{$450$}}
    \psfrag{10}{\footnotesize{$10$}}
    \psfrag{30}{\footnotesize{$30$}}
    \psfrag{50}{\footnotesize{$50$}}
    \psfrag{70}{\footnotesize{$70$}}
    \psfrag{90}{\footnotesize{$90$}}
    \psfrag{110}{\footnotesize{$110$}}
    \psfrag{130}{\footnotesize{$130$}}
    \psfrag{150}{\footnotesize{$150$}}
    \psfrag{170}{\footnotesize{$170$}}
    \psfrag{190}{\footnotesize{$190$}}
    \psfrag{aaaaaaaaaaaa}[][][.65]{$F_{min}=1000$}
    \psfrag{ccccccccccccc}[][][.65]{$F_{min}=750$}
    \psfrag{eeeeeeeeeeee}[][][.65]{$F_{min}=500$}
    \psfrag{xlabel}[][][0.9]{Number of files $n$}
    \psfrag{ylabel}[][][0.9]{$T(\mathcal{M}_\mathcal{P},\mathbf{S})$}
    \includegraphics[scale=.58]{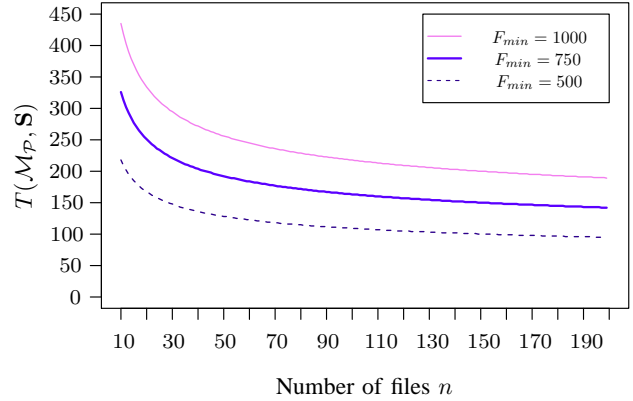}
    \caption{Plot of the number of set covers obtained by the \texttt{PolyOn} algorithm $T(\mathcal{M}_\mathcal{P},\mathbf{S})$ versus the number of files $n$ for $3$ values of $F_{min}$.} \label{sim}
    \end{center}
\end{figure}
\section{Concluding Remarks}
We presented and analysed online algorithms for the DSCP in terms of their worst case competitive ratios. We found a lower bound on the competitive ratio for all online algorithms, and presented an online algorithm that performed comparable to that lower bound for reasonable $n$. We conjecture that the tight lower bound on the competitive ratio is $\ln n$, but it will most likely require an entirely different approach to show. Analysis of the average case competitive ratio of online algorithms for the DSCP is still an open problem.

The results of this paper can be extended to produce online algorithms with competitive ratio $\ln n$ for all problems that involve finding disjoint bases in a polymatroid \cite{cualinescu2009disjoint}. Some examples of such problems are the domatic number problem \cite{feige2002approximating}, which has applications in the connectivity of WSNs \cite{moscibroda2005maximizing}, and in packing element-disjoint Steiner trees \cite{cheriyan2007packing}. The lower bound on the competitive ratio also carries over to the general polymatroid problem of \cite{cualinescu2009disjoint}.

\bibliographystyle{IEEEtran}
\bibliography{research}

\begin{appendix} \label{app}
\textbf{Offline Solution for the special case:} When $\mathcal{M}_\mathcal{A}(\mathbf{S}_{com})$ was such that $d_j\geq q/2$ for some $j=k$ (say), the offline solution could no longer be found by pairing subsets from different partitions. Note that only one such partition $P_k$ can exist. For this case alone, after the allocation $\mathcal{M}_\mathcal{A}(\mathbf{S}_{com})$, we generate the adversarial subset sequence (either $\mathbf{S}_a$ or $\mathbf{S}_b$), differently. We consider partition $P_k$ to consist of $2$ partitions $P_{k_1}$ and $P_{k_2}$, each of size less than $q/2$, and consider each to have its own bottleneck element. We then construct the adversarial sequence with this assumption of an additional bottleneck element. Now, all online algorithms are subject to all the constraints of \eqref{opt1} (and \eqref{opt2}), except that they can, in addition, make $P_k$ a set cover. Therefore, the statements of Lemmas \ref{lemma1} (and \ref{lemma1a}) still hold. The offline algorithm, however, will produce $q/2$ disjoint set covers, where $2$ subsets can now be chosen pairwise from $P_{k_1}$ and $P_{k_2}$. For this case too, Theorems \ref{root3} and \ref{root2} hold.
\end{appendix}
\end{document}